\documentclass[journal]{IEEEtran}
\usepackage{amsmath,amsfonts}
\usepackage{algorithmic}
\usepackage{array}
\usepackage{textcomp}
\usepackage{stfloats}
\usepackage{url}
\usepackage{verbatim}
\usepackage{graphicx}
\def\BibTeX{{\rm B\kern-.05em{\sc i\kern-.025em b}\kern-.08em
    T\kern-.1667em\lower.7ex\hbox{E}\kern-.125emX}}
\usepackage{balance}
\usepackage{verbatim}
\usepackage{amsfonts}
\usepackage{cleveref}
\usepackage{amssymb}
\usepackage{stfloats}
\usepackage{cite}
\usepackage{soul}
\usepackage{setspace}
\usepackage{graphicx}
\usepackage{psfrag}
\usepackage{amsmath}
\usepackage{array}
\usepackage{epstopdf}
\usepackage{algorithm}
\usepackage{graphicx} 
\usepackage{amsthm} 
\usepackage{lipsum}
\usepackage{verbatim} 
\usepackage{mathtools}
\usepackage{cuted}
\usepackage{bm}
\usepackage{booktabs}
\usepackage{subfigure}
\usepackage[font=small]{caption} 
\usepackage{xcolor}

\usepackage{ifthen}

\newtheorem{Lemma}{Lemma}
\newtheorem{Theorem}{Theorem}

\newtheorem{Proposition}{Proposition}
\newtheorem{Remark}{Remark}

\definecolor{revblue}{RGB}{0,70,140}
\definecolor{revred}{RGB}{170,0,0}

\begin{document}

\title{\fontsize{22.8pt}{28pt}\selectfont Multi-SPIN: Multi-Access Speculative Inference for Cooperative Token Generation at the Edge}
\author{ 
Haotian~Zheng,
Zhanwei~Wang,
Mingyao~Cui,
Chang~Cai,\\
Hongyang~Du,
and Kaibin~Huang,~\IEEEmembership{Fellow,~IEEE}%
  
   \thanks{H.~Zheng, Z.~Wang, M.~Cui, C.~Cai, H.~Du, and K. Huang are with the Department of Electrical and Computer Engineering, The University of Hong Kong, Hong Kong SAR, China (Email: \{htzheng, zhanweiw, mycui, changcai, duhy, huangkb\}@eee.hku.hk).
  Corresponding authors: K. Huang; Z. Wang.}\vspace{-3.5mm}
  }

\maketitle

\begin{abstract}
Speculative inference (SPIN) was originally developed as an efficient architecture to accelerate large language models (LLMs). 
In this work, we propose its distributed deployment to enable cooperative token generation in a multiuser edge system; its advantage is to effectively balance computational loads between resource-constrained devices and servers.
The resulting architecture, termed Multi-access SPIN (Multi-SPIN), utilizes on-device small language models (SLMs) to generate and upload candidate token drafts, alongside an edge server that operates the LLM to verify these drafts in parallel batches. 
Given the severe heterogeneity in users' computation and communication (C$^2$) capabilities, the draft length emerges as a critical control variable that influences node-level computation loads and multi-access latency, thereby governing the sum token goodput.
Consequently, considering frequency-division multiple access, we investigate the problem of so-called \emph{multi-access draft control}, referring to joint optimization of draft-length control and bandwidth allocation to maximize the sum token goodput.
We examine two cases: (1) homogeneous draft lengths across users to facilitate server-side batching, and (2) heterogeneous draft lengths to introduce a new dimension for goodput enhancement. 
By developing decomposition methods, we systematically reduce these complex optimizations into tractable sub-problems, which allow efficient draft control algorithms to be derived in closed form.
Our analysis shows that the optimal bandwidth allocation compensates users with weaker C$^2$ capabilities in the homogeneous case due to the batching synchronization requirements, whereas its heterogeneous-case counterpart rewards users with higher acceptance rates by relaxing such requirements.
Experiments utilizing Llama-2 and Qwen3.5 model pairs across tasks demonstrate that Multi-SPIN improves goodput by up to $88\%$ compared to heterogeneity-agnostic baselines.
\end{abstract}

\begin{IEEEkeywords}
Edge networks, speculative inference, multiple access, draft-length control, bandwidth allocation

\end{IEEEkeywords}


\section{Introduction}

Deploying \emph{large language models} (LLMs) at the \emph{sixth-generation} (6G) network edge enables highly efficient,
low-latency generative AI services for mobile devices~\cite{6Groadmap,GX-CM-2020}.
However, resource-constrained edge servers often struggle with the heavy computational demands of LLMs
containing tens to hundreds of billions of parameters~\cite{liu2023resource,qu2025partialloading}.
\emph{Speculative inference} (SPIN) mitigates this by leveraging distributed on-device computing:
devices use lightweight \emph{small language models} (SLMs) to generate candidate token sequences, or \emph{drafts},
which are then uploaded to the server for parallel LLM verification~\cite{shao2025ai}.
As such single-pass verification is significantly less complex than sequential autoregressive generation,
SPIN drastically reduces the server's computational burden without sacrificing accuracy~\cite{leviathan2023fast}.
Nevertheless, scaling SPIN for multiuser systems introduces two key challenges.
The first is a local \emph{computation-and-communication} (C$^2$) bottleneck driven by on-device draft
generation and multi-access uplink transmissions.
The second is a global computation bottleneck resulting from the simultaneous verification of
multiple drafts at the server.
To overcome these limitations, we propose a \emph{multi-access SPIN} (Multi-SPIN) framework that utilizes draft batching for efficient parallel server-side verification.
As a key component of this framework, we present novel algorithms to jointly optimize draft lengths
and uplink bandwidth allocation, with the objective of maximizing the token goodput.

The advantages of SPIN can be clearer when compared with conventional edge LLMs.
In standard server-hosted LLM generation, each token requires a computationally expensive forward
pass and extensive memory access to retrieve the ``KV cache'', referring to historical intermediate
data essential for autoregressive generation~\cite{kwon2023efficient}.
Repeating this sequential process over a long token sequence and for multiple users severely strains
resource-constrained edge servers~\cite{zhang2024edgeshard}.
In contrast, the Multi-SPIN framework offloads much of this workload to many increasingly powerful
mobile processors by using local SLMs to generate draft token sequences~\cite{hao2024hybrid,xu2024edgellm}.
Although these drafts naturally deviate from the LLM's target distribution~\cite{miao2024specinfer,chen2023accelerating},
the edge server resolves this through a highly efficient verification process.
By computing the target token probabilities of multiple user drafts simultaneously in a single batched
forward pass, the server accepts valid tokens and calibrates any discrepancies to guarantee LLM-level
quality before returning the finalized tokens to the devices.

The current mainstream approach for alleviating computational load on edge servers is so-called
\emph{split inference}~\cite{li2019edge}.
It offloads early layers of a global model (i.e., deep convolutional neural network) to an edge device
while processing the remainder at the server.
The split point can be dynamically adjusted to balance computing loads and control wireless transmission overhead.
While effective for sensing and object recognition, this architecture incurs severe latency for autoregressive token generation in transformer-based LLMs as explained later.
Existing split inference research focuses on reducing communication overhead via split-point
optimization~\cite{shao2021communication, yan2022optimal} and feature compression and
transmission~\cite{wang2025revisiting,wang2026airbreath,zeng2024knowledge}.
Recent applications to edge LLMs have introduced practical techniques such as memory-aware layer
placement~\cite{zhang2024edgeshard}, over-the-air tensor parallelism~\cite{zhang2025llm-inference},
and activation-based routing of expert sub-models~\cite{xue2025wdmoe,wang2026spacemoe}.
However, these methods overlook the fundamental latency bottleneck of autoregressive generation.
In a split framework, a device cannot process a new token until the server generates and transmits the preceding one.
This sequential dependency requires numerous communication rounds to generate a full sequence;
each token still necessitates a computationally inefficient full forward pass through the global model.
In contrast, Multi-SPIN achieves higher communication efficiency by having the device upload an entire drafted sequence to the edge server in a single shot for LLM verification.
Furthermore, by verifying multiuser drafts in parallel within a single forward pass, server computation becomes inherently more scalable and efficient than the sequential processing mandated by split inference.

The advantages of SPIN have motivated recent studies on its edge deployment; however, the early work remains largely confined to simple point-to-point, single-user systems.
Existing research primarily focuses on mitigating uplink overhead via customized compression
strategies, such as importance-aware logit truncation~\cite{zheng2025communicationefficientcollaborativellminference}, draft-token selection~\cite{oh2024uncertainty}, and adaptive quantization~\cite{zhang2025quantize}.
Other studies attempt to boost the LLM token-acceptance rate by uploading multiple drafts, albeit at the expense of higher local C$^2$ overhead~\cite{zheng2025fastcollaborativeinferencedistributed}.
In view of prior work, the single-user approaches fail to address the complexities inherent to typical multiuser edge systems~\cite{wen2023task}.
The primary challenge lies in the severe heterogeneity across devices (i.e., varying C$^2$
capabilities and prompt characteristics), which naturally leads to diverse token-acceptance rates~\cite{chen2025spin}.
This heterogeneity necessitates the joint optimization of individual draft lengths to maximize the \emph{sum token goodput}, defined as the aggregate expected number of accepted tokens per second across all devices.
Such optimization must balance a fundamental tradeoff: while longer drafts benefit devices with high acceptance rates, they inherently prolong local computation and transmission latency, and vice versa.
Furthermore, although traditional radio resource management can mitigate multi-access latency, it yields sub-optimal \emph{end-to-end} (E2E) performance because it is decoupled from draft control, treating communication overhead independently of computational delay~\cite{lyu2024rethinking}.

The draft length is a unique control variable in Multi-SPIN that directly influences the sum token goodput.
Specifically, it determines the number of tokens each user generates and transmits, thereby driving the local computation loads, the uplink communication latency, and the server's verification overhead. 
For a local SLM with a given token-acceptance rate, an excessively long draft incurs unnecessary C$^2$ latency, as the server will reject many of the drafted tokens. 
On the other hand, an overly short draft diminishes server efficiency, as the fixed computational overhead of a batched forward pass is amortized over too few verified tokens. 
Because token-acceptance rates and wireless channel conditions vary across devices, the optimal control of multiuser draft lengths becomes important for maximizing the sum token (generation) goodput.
Addressing the issue for a multiuser system gives rise to a novel problem, termed \emph{multi-access draft control}, referring to joint control of draft lengths and bandwidth allocation to maximize the sum token goodput.
Formulating and solving this problem forms the core contribution of this work.

The work is arguably the first study of distributed SPIN deployment in a multiuser wireless-edge system by proposing the Multi-SPIN framework. 
As summarized below, it comprises a set of algorithms to solve the optimal multi-access draft control problem under different system settings.
The key contributions and findings of this work are summarized as follows.

\begin{itemize}

    \item \textbf{Multi-access Draft Control:}
    To facilitate batched processing at the server, we assume uniform draft lengths across all users under centralized server control. The sum token goodput maximization problem is then optimally decoupled into two sub-problems.
    The first sub-problem focuses on optimal draft-length control under a given bandwidth allocation. Here, we identify a fundamental content--latency tradeoff: as the draft length increases, the number of accepted tokens gradually saturates, whereas the E2E latency grows linearly. This renders the sum token goodput a unimodal function of the draft length, allowing the unique optimum to be derived in closed form. The optimal draft length is shown to increase monotonically with both the acceptance rate and the verification latency, thereby amortizing the heavy verification overhead over more accepted tokens.
    The second sub-problem addresses bandwidth allocation to minimize the multi-access C$^2$ latency, where the derived optimal strategy allocates larger bandwidth to devices with weaker C$^2$ capabilities due to the synchronization requirement inherent to server-side batching.

    \item \textbf{Multi-access Draft Control with Heterogeneous Lengths:}
    Next, we permit heterogeneous draft lengths across users, which opens a new dimension for goodput enhancement. To maintain compatibility with server-side batched processing, draft-length variations are accommodated via zero-padding. Although this flexibility deepens the coupling between draft control and bandwidth allocation, the problem can still be optimally decomposed for a low-complexity solution. This is achieved by using the multi-access latency as an intermediate variable, which yields two nested sub-problems: draft-length control for a given multi-access latency, and bandwidth allocation to minimize this latency. Solving both in closed form reduces the original high-dimensional optimization to a highly efficient two-dimensional search. The resulting optimal strategy assigns longer drafts and greater bandwidth to devices with higher acceptance rates---in contrast to the bandwidth allocation derived for the homogeneous case.

    \item \textbf{Experimental Results:} 
    Experiments on Llama-2 and Qwen3.5 model pairs confirm the derived tradeoff and the substantial goodput gain of Multi-SPIN over heterogeneity-agnostic baselines.
    The heterogeneous-length scheme consistently surpasses its homogeneous counterpart, with the gap enlarging as the network scales to more devices.
\end{itemize}

The remainder of this paper is organized as follows. Sec.~\ref{sec:model_metric} presents the system model and performance metrics. 
Sec.~\ref{sec:protocol_overview} introduces the Multi-SPIN protocol and problem formulation. 
Sec.~\ref{sec:draft_length_teba} studies the multi-access draft control problem under uniform draft lengths.
Sec.~\ref{sec:joint_scheduling} extends the analysis to the general regime with heterogeneous draft lengths. 
Sec.~\ref{sec:experiment_results} provides the experimental results, and Sec.~\ref{sec:conclusions} concludes the paper.

\section{Models and Metrics}
\label{sec:model_metric}

The proposed Multi-SPIN framework considers a single-cell edge network in which an edge server, which hosts an LLM, provides generative AI services to $K$ distributed devices.
Each device executes a local SLM to generate and upload a draft (i.e., token sequence) to the server, where the server-side LLM collects and verifies the multiuser drafts in a batch.
The associated models and performance metrics are described in the subsections.

\subsection{Speculative-inference Operations and Models}
Each epoch of the token generation process in SPIN spans multiple rounds, each comprising the following two sequential operations: \emph{local drafting} and \emph{server-side verification}.

\subsubsection{Local Drafting}
\label{sec:local_draft_model}
Consider an arbitrary SPIN round for device $k\in\{1,\dots,K\}$. In the first step, the device generates a token draft using its on-device SLM, parameterized by $\Phi^{\mathsf{S}}$.
This step follows the classical autoregressive inference paradigm, where the draft is generated token by token, as illustrated in the lower part of Fig.~\ref{fig:inference_model_demo}. 
The associated notation is introduced as follows.
Let $M_k$ and $L_k$ denote the numbers of tokens in the prefix sequence and the draft of device $k$, respectively. Let $\hat{X}_{k}\triangleq(\hat{x}_{k,1},\ldots,\hat{x}_{k,L_k})$ denote the draft generated by device $k$ conditioned on the prefix sequence $X_{k}=(X_k^{\sf pt},x_{k,1},\dots,x_{k,M_k})$, where $X_k^{\sf pt}$ denotes the token sequence corresponding to the input prompt of device $k$. Each token takes a value from the vocabulary $\mathcal{V}$, whose cardinality is $V=|\mathcal{V}|$.

For the $\ell$-th token, denoted by $\hat{x}_{k,\ell}$, the SLM computes, based on the current prefix sequence, a probability vector that characterizes the conditional \emph{probability mass function} (PMF) of $\hat{x}_{k,\ell}$ over the vocabulary:
\begin{equation}
\label{eq:prob_vector}
\begin{split}
\mathbf{p}^{\mathsf{S}}_{k,\ell}
&\triangleq
\left[
P^{\mathsf{S}}_{k,\ell}(1),\dots,P^{\mathsf{S}}_{k,\ell}(v),\dots,P^{\mathsf{S}}_{k,\ell}(V)
\right]^{\sf T}\\
&=f_{\Phi^{\mathsf{S}}}\!\left( X_{k}, \hat{x}_{k,1},\ldots,\hat{x}_{k,\ell-1}\right).
\end{split}
\end{equation}
Here, $P^{\mathsf{S}}_{k,\ell}(v)=\Pr(\hat{x}_{k,\ell}=v \mid X_k,\hat{x}_{k,1},\ldots,\hat{x}_{k,\ell-1})$ denotes the probability assigned by the SLM to vocabulary token $v\in\mathcal{V}$, and $f_{\Phi^{\mathsf{S}}}$ denotes the inference mapping of the SLM. 
Each token realization is obtained by sampling $\hat{x}_{k,\ell} \sim \mathbf{p}^{\mathsf{S}}_{k,\ell}$, and the draft $\hat{X}_{k}$ is formed by recursively computing in \eqref{eq:prob_vector} and sampling across positions $\ell=1,\ldots,L_k$.
Since this sequential process requires one SLM forward pass per token, the total local drafting latency is given by
\begin{equation}
T_k^{\mathsf{dr}} = L_k T_k^{\mathsf{S}},
\label{eq:draft_latency}
\end{equation}
where $T_k^{\mathsf{S}}$ is the per-token inference latency of device $k$.
The generated draft and associated probability vectors are then forwarded to the edge server for verification.

\begin{figure}[t]
    \centering
\includegraphics[width=0.95\linewidth]{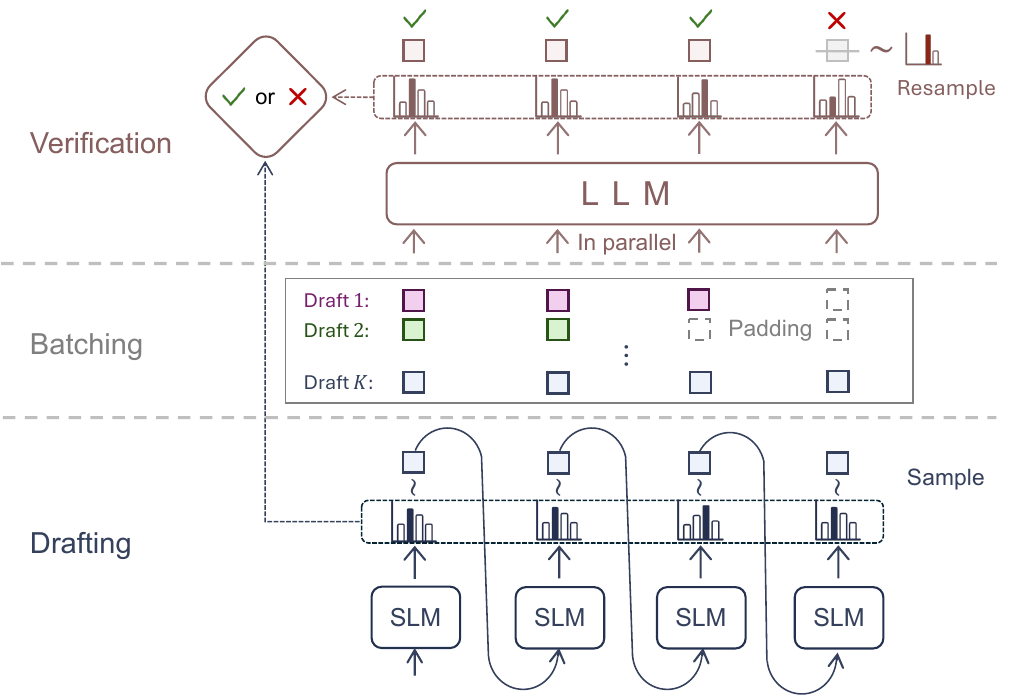}
    \caption{Illustration of speculative-inference operations and models.}
    \label{fig:inference_model_demo}
\end{figure}

\subsubsection{Server-side Verification}
\label{sec:server_verify}

The edge server uses the LLM, parameterized by $\Phi^{\mathsf{L}}$, to verify the draft $\hat{X}_{k}$ generated by device $k$, as illustrated in the upper part of Fig.~\ref{fig:inference_model_demo}. 
Unlike autoregressive decoding, which requires one sequential forward pass per token, the server-side LLM verifies the entire draft in \emph{a single parallel forward pass}.
Specifically, for each drafted position $\ell$, the LLM computes a probability vector that characterizes the conditional PMF over the vocabulary, given by
\begin{equation}
\label{eq:prob_vector_LLM}
\begin{split}
\mathbf{p}^{\mathsf{L}}_{k,\ell}
&\triangleq
\left[
P^{\mathsf{L}}_{k,\ell}(1),\dots,P^{\mathsf{L}}_{k,\ell}(v),\dots,P^{\mathsf{L}}_{k,\ell}(V)
\right]^{\sf T}\\
&=f_{\Phi^{\mathsf{L}}}\!\left( X_{k}, \hat{x}_{k,1},\ldots,\hat{x}_{k,\ell-1}\right),
\end{split}
\end{equation}
where $P^{\mathsf{L}}_{k,\ell}(v)=\Pr(\hat{x}_{k,\ell}=v\mid X_k,\hat{x}_{k,1},\ldots,\hat{x}_{k,\ell-1})$ denotes the probability assigned by the LLM to token $v\in\mathcal{V}$, and $f_{\Phi^{\mathsf{L}}}$ denotes the inference mapping of the LLM. 

Based on the probability vectors of the SLM and LLM, the server performs token verification as follows. 
For the draft token $\hat{x}_{k,\ell}$, the verification outcome is modeled as a Bernoulli random variable $A_{k,\ell} \sim \mathrm{Bernoulli}(\beta_{k,\ell})$,
where $A_{k,\ell}=1$ and $A_{k,\ell}=0$ indicate acceptance and rejection, respectively, and $\beta_{k,\ell}$ denotes the acceptance probability, specified as
\begin{equation}
\beta_{k,\ell}
\triangleq
\min\!\left\{1,\,
\frac{P^{\mathsf{L}}_{k,\ell}\!\left(\hat{x}_{k,\ell}\right)}
{P^{\mathsf{S}}_{k,\ell}\!\left(\hat{x}_{k,\ell}\right)}
\right\}.
\label{eq:accept_prob}
\end{equation}
Here, $P^{\mathsf{S}}_{k,\ell}\!\left(\hat{x}_{k,\ell}\right)$ and $P^{\mathsf{L}}_{k,\ell}\!\left(\hat{x}_{k,\ell}\right)$ are the probabilities assigned in~\eqref{eq:prob_vector} and~\eqref{eq:prob_vector_LLM}, respectively, to the drafted token $\hat{x}_{k,\ell}$.

The server examines the outcomes sequentially and identifies the first rejected position $\ell_k^{\sf rej}=\min\{\ell : A_{k,\ell}=0,\, \ell=1,\dots,L_k\}$.
All tokens with $\ell<\ell_k^{\sf rej}$ are accepted, and the rejected position is replaced by a calibrated token $x^{\sf cal}_{k,\ell_k^{\sf rej}}$ sampled from a calibrated distribution~\cite{leviathan2023fast}.
The resulting output token sequence after LLM verification, denoted by $\tilde{X}_k$, is provided by
\begin{equation}
\tilde{X}_k=
\begin{cases}
(\hat{x}_{k,1},\dots,\hat{x}_{k,\ell_k^{\sf rej}-1},x^{\sf cal}_{k,\ell_k^{\sf rej}}), & \ell_k^{\sf rej}\leq L_k,\\
(\hat{x}_{k,1},\dots,\hat{x}_{k,L_k},\tilde{x}_{k,L_k+1}), & \text{otherwise},
\end{cases}
\end{equation}
where $\tilde{x}_{k,L_k+1}\sim \mathbf{p}^{\mathsf{L}}_{k,L_k+1}$ is one additional token sampled from the LLM distribution when all $L_k$ drafted tokens are accepted.
Appending $\tilde{X}_k$ to the current prefix sequence yields the prefix for the next round:
\begin{equation}
\label{eq:prefix_update}
X_{k}\leftarrow (X_k, \tilde{X}_k).
\end{equation}

On one hand, tokens generated by the above drafting-verification cooperation match the accuracy of direct LLM inference~\cite{chen2023accelerating}.
On the other hand, SPIN enables the LLM to evaluate all drafted tokens in parallel, approximately reducing the generation complexity from $\mathcal{O}(L\cdot \left|\Phi^{\mathsf{L}}\right|)$ to $\mathcal{O}(\left|\Phi^{\mathsf{L}}\right|)$~\cite{pope2023efficiently, nvidia_a100_datasheet}.

To reduce the verification overhead of serving $K$ devices, we consider adopting \emph{batching} at the server, packing the received drafts into a single batch tensor and processing them in one forward pass of the LLM~\cite{qian2024bass}.
Note that in the case of heterogeneous draft lengths, zero-padding is needed to enable batching, where shorter drafts are padded to a uniform length before being stacked into the batch tensor.
Based on empirical observations on batched LLM inference~\cite{su2023synergy}, the batched verification latency is modeled as
\begin{equation}
\label{eq:batched_validation}
T^{\mathsf{ver}}(K)=T^{\mathsf{fix}} + KT^{\mathsf{lin}},
\end{equation}
where $T^{\mathsf{fix}}$ captures fixed overhead such as GPU kernel launches and $T^{\mathsf{lin}}$ is the incremental latency per additional draft in the batch\footnote{To keep the analysis focused, we consider $T^{\mathsf{ver}}$ as approximately independent of the draft length. This draft-length-agnostic modeling of verification latency is widely adopted in the literature, e.g.,~\cite{leviathan2023fast, pope2023efficiently}.}.

\subsection{Multiple Access Model}
\label{subsubsec:token_tx}

In the proposed Multi-SPIN system, orthogonal multiple access is adopted for uploading drafts from $K$ distributed devices to the edge server. Specifically, the system employs \emph{orthogonal frequency-division multiple access} (OFDMA) over a broadband uplink channel. For simplicity, the number of subcarriers is assumed sufficiently large such that bandwidth allocation can be approximated as continuous~\cite{wu2025resource}.
For device $k$, let $p_k$ and $B_k$ denote the transmit power 
and allocated bandwidth, respectively.
Let $H_k = |h_k|^2$ denote the channel power gain, where 
$h_k \sim \mathcal{CN}(0,\,\bar{H}_k)$ with $\bar{H}_k$ 
representing the average channel power gain that incorporates 
path loss and large-scale fading.
The channel is assumed to remain constant within one round 
and to be known at the server~\cite{you2016energy}.
The resulting uplink rate is given by
\begin{equation}
  R_k = B_k \log_2\!\left(1 + \frac{p_k\,H_k}{N_0\,B_k}\right)
      \triangleq B_k\,r_k,
  \label{eq:uplink_rate}
\end{equation}
where $N_0$ denotes the noise power spectral density and $r_k$ is defined to be the uplink spectrum efficiency.

Consider uploading the token draft of device $k$ with draft length $L_k$. 
To reduce the uplink overhead, each device uploads, for every drafted token, only the $|\hat{\mathcal V}|$ retained probability values and their corresponding vocabulary indices, rather than the full-dimensional probability vector.
Then, each probability value is quantized using $Q_{\sf B}$ bits, where $Q_{\sf B}$ is chosen sufficiently large such that the resulting quantization error is negligible (e.g., $Q_{\sf B}=16$). 
Since each vocabulary index requires $\left\lceil \log_2 V \right\rceil$ bits, the uplink transmission latency is given by
\begin{equation}
T_k^{\sf tx}
=
\frac{Q_{\mathsf{tok}}}{B_k r_k}L_k,
\label{eq:logit_latency}
\end{equation}
where $Q_{\mathsf{tok}}\triangleq
|\hat{\mathcal V}|\left(Q_{\sf B}+\left\lceil \log_2 V \right\rceil\right)$ is the per-token communication overhead and $V$ is the vocabulary cardinality. 
The downlink transmission latency is neglected due to the ample transmit power of the edge server.

\subsection{Definition of Sum Token Goodput}

To evaluate the performance of the multi-access SPIN system, we consider the \emph{sum token goodput}, defined as the aggregate expected number of accepted tokens generated per second. The exact mathematical definition is given below.

\subsubsection{Expected Number of Accepted Tokens}
Consider an arbitrary SPIN round for device $k$, where the SLM generates a token draft $\hat{X}_{k}$ of length $L_k$. Owing to the randomness of server-side verification, the number of accepted tokens, denoted by $N_k$, is a random variable. 
To enable tractable analysis, the token-acceptance events are widely considered as i.i.d. across drafted positions~\cite{leviathan2023fast}. Accordingly, the acceptance probability can be obtained as the mean verification outcome
\begin{equation}
\alpha_k \triangleq \mathbb{E}_{X_k^{\sf pt},\hat{x}_{k,\ell}}\!\left[A_{k,\ell}\right]
= \mathbb{E}_{X_k^{\sf pt},\hat{x}_{k,\ell}}\!\left[\beta_{k,\ell}\right],
\label{eq:alpha_def}
\end{equation}
where the expectation is taken over prompts $X_k^{\sf pt}\in\mathcal{D}_k$ and the drafted tokens generated by the SLM. 
The quantity $\alpha_k$, termed the \emph{acceptance rate}, can be statistically estimated for a given SLM--LLM pair and prompt categories~\cite{parallelSD, zhang2025quantize}.

Under this approximation, the PMF of $N_k$ is given by
\begin{equation}
\Pr(N_k=\ell)=
\begin{cases}
\alpha_k^{\ell-1}(1-\alpha_k), & \ell=1,2,\dots,L_k,\\
\alpha_k^{L_k}, & \ell=L_k+1.
\end{cases}
\end{equation}
Accordingly, for a fixed draft length $L_k$ of device $k$, the expected number of accepted tokens becomes~\cite{parallelSD}
\begin{equation}
\mathbb{E}[N_k\mid L_k]
=
\sum_{\ell=0}^{L_k}\alpha_k^{\ell}
=
\frac{1-\alpha_k^{L_k+1}}{1-\alpha_k}.
\label{eq:expected_tokens}
\end{equation}

\subsubsection{Sum Token Goodput}
\label{sec:token_goodput}
In Multi-SPIN, the sum token goodput is defined as the expected number of accepted tokens aggregated over all $K$ devices, divided by the E2E latency of one Multi-SPIN round.
Denote ${T^{\sf e2e}(\mathcal{B},\mathcal{L})}$ as the E2E execution latency of completing one SPIN round for all $K$ devices, which will be quantified in the next section. Then, the sum goodput of Multi-SPIN is given by

\begin{equation}
\label{eq:e2e_metric}
  \tau(\mathcal{B},\mathcal{L}) = \frac{ \sum_{k=1}^K \mathbb{E}[N_k\mid L_k] }{T^{\sf e2e}(\mathcal{B},\mathcal{L})},
\end{equation}
where $\mathcal{B}=\{B_k\}_{k=1}^K$ and $\mathcal{L}=\{L_k\}_{k=1}^K$ denote the sets of bandwidth allocations and draft lengths of the devices.

\begin{figure*}[!t]
    \centering

    \subfigure[Multi-SPIN system.]{%
        \includegraphics[width=0.57\textwidth]{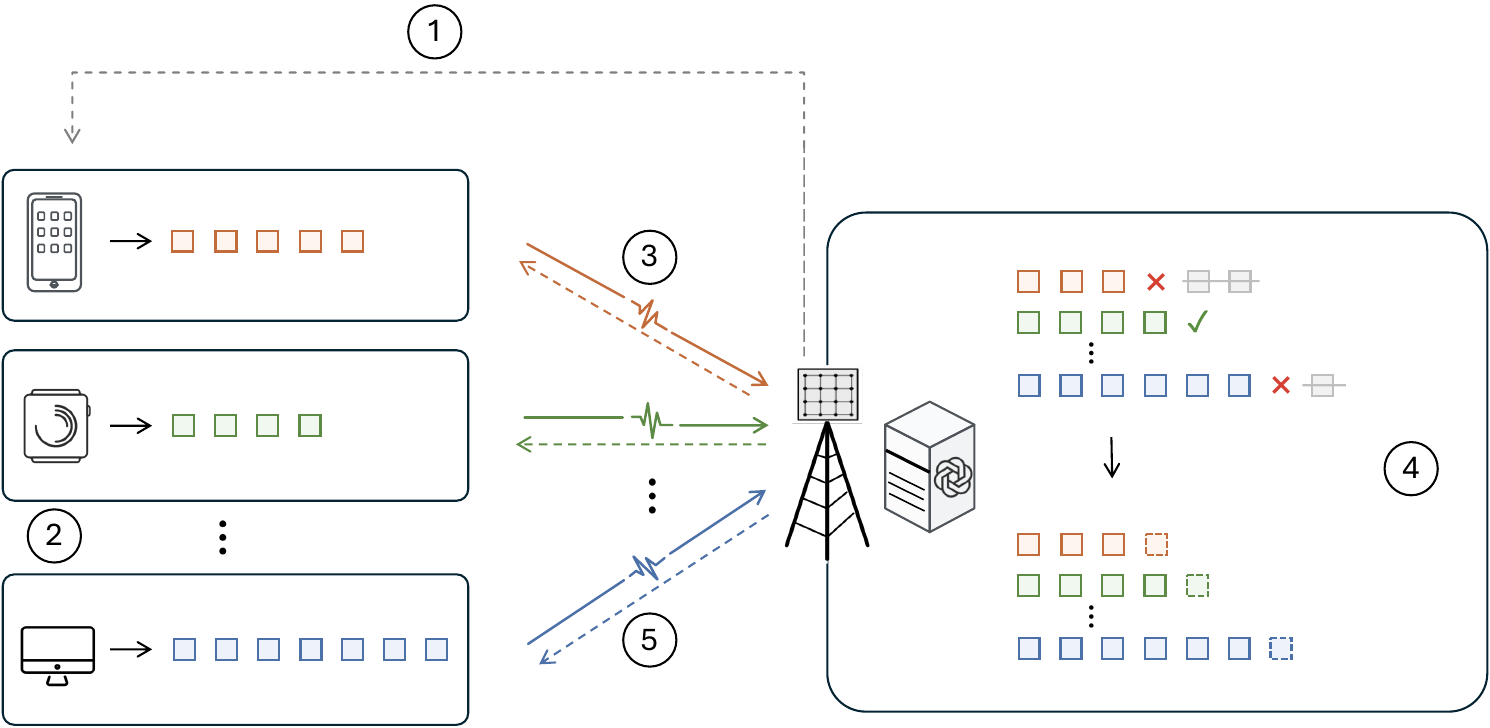}%
        \label{fig:overview_a}%
    }
    \hfill
    \subfigure[Multi-SPIN operations and protocol.]{%
        \includegraphics[width=0.38\textwidth]{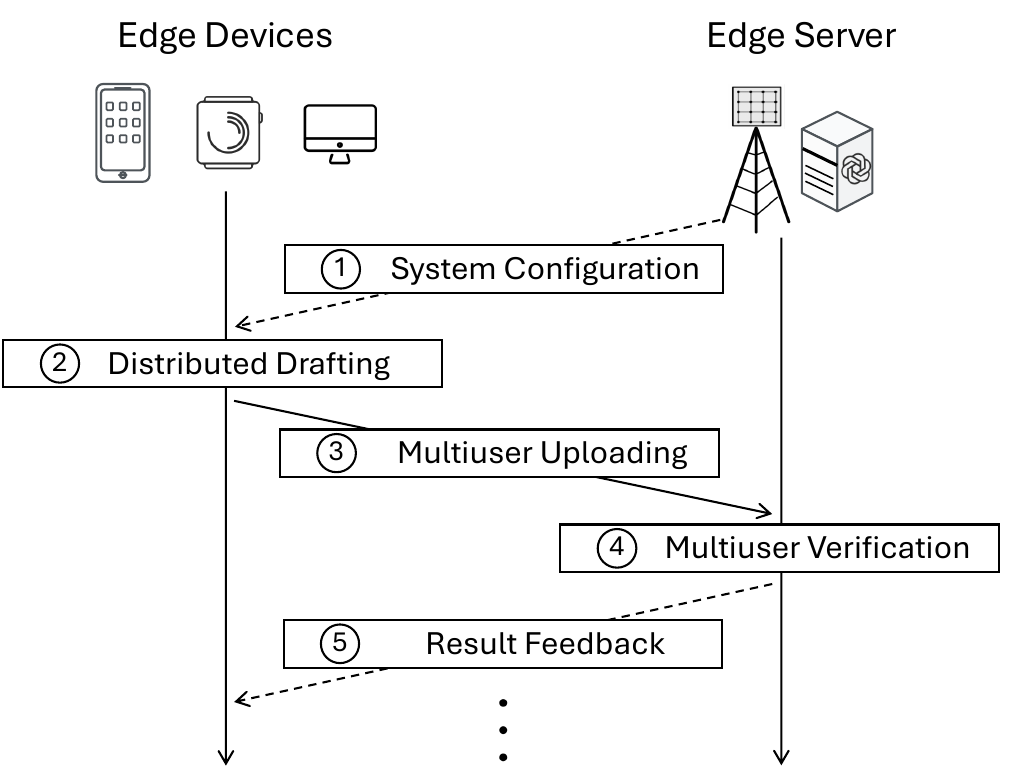}%
        \label{fig:overview_b}%
    }

    \caption{Overview of the Multi-SPIN framework.}
    \label{fig:overview}
\end{figure*}

\section{Protocol and Problem Formulation}
\label{sec:protocol_overview}
In this section, we present the Multi-SPIN protocol for coordinating parallel SPIN tasks among distributed devices. 
Then, the problem of optimal multi-access draft control is formulated.

\subsection{Multi-SPIN Protocol}

The steps of the Multi-SPIN protocol are illustrated in Fig.~\ref{fig:overview} and elaborated as follows.

\begin{enumerate}
\item \emph{System Configuration:} 
At the beginning of each round, each device reports its task profile (including the acceptance rate) and computation speed to the server.
The server then measures the uplink channel conditions and determines the draft lengths and bandwidth allocations by solving the multi-access draft control problem in Sec.~\ref{sec:goodput_problem}.
The resulting configurations are delivered to all devices.

\item \emph{Distributed Drafting:}
Each device independently generates a draft of $L_k$ tokens using its local SLM, incurring a local computation latency in~\eqref{eq:draft_latency}.

\item \emph{Multiuser Draft Uploading:} 
Each device transmits the predicted PMF vectors in \eqref{eq:prob_vector} together with the drafted token indices to the edge server via its assigned OFDMA channel, resulting in an uplink transmission latency in \eqref{eq:logit_latency}.

\item \emph{Multiuser Draft Verification:}
Upon receiving all $K$ drafts, the edge server performs batched verification via a single LLM forward pass.
For each draft, tokens are accepted or rejected via~\eqref{eq:accept_prob}, and the rejected one is replaced with a calibrated token.

\item \emph{Result Feedback:}
The server returns the verified token sequence to each device, which updates its prefix via~\eqref{eq:prefix_update}. 
If the current request is completed, the device proceeds to its next request.
The system then loops back to Step~1 for the next round.

\end{enumerate}

\subsection{Goodput Maximization Problem}
\label{sec:goodput_problem}

This subsection first characterizes the explicit expression of the Multi-SPIN goodput in~\eqref{eq:e2e_metric} and then formulates the corresponding optimization problem to maximize it.
We consider a uniform-length regime where all devices adopt a uniform draft length, i.e., $L_k=L$ for all $k$, and a common acceptance rate, i.e., $\alpha_k=\alpha$ for all $k$. 
This setting is practically reasonable when devices have comparable C$^2$ capabilities and serve prompts from the same task family, as it reduces control complexity and enables synchronized batch verification at the server~\cite{leviathan2023fast}.
This problem is extended into heterogeneous scenarios for device-specific acceptance rate and draft-length control in Sec.~\ref{sec:joint_scheduling}.

Consider an arbitrary round.
Under the uniform-length regime, the expected number of accepted tokens in~\eqref{eq:expected_tokens} becomes identical across devices, so the aggregate over all $K$ devices is given by
\begin{equation}\label{eq:ENk_uniform}
  \sum_{k=1}^K\mathbb{E}[N_k \mid L_k=L] = K\cdot\frac{1-\alpha^{L+1}}{1-\alpha}.
\end{equation}
Local drafting and uploading are performed in parallel across devices. 
Combining \eqref{eq:draft_latency} and \eqref{eq:logit_latency}, we define the \emph{multi-access latency} $T^{\mathsf{ma}}(\mathcal{B},L)$ as the time required for all $K$ devices to generate and upload their drafts over the shared uplink, which is dominated by the slowest device and given by
\begin{equation}\label{eq:Tde}
  T^{\mathsf{ma}}(\mathcal{B},L)
  = L\,\max_{k}\!\left\{T_k^{\mathsf{S}}
    + \frac{Q_{\mathsf{tok}}}{B_k\, r_k}\right\}.
\end{equation}
The maximum arises because the server must wait for the slowest device to complete draft generation and token uploading.
As the multi-access and server-side phases execute sequentially within each round, the E2E round latency is
\begin{equation}
\label{eq:e2e_exe_lat}
    T^{\mathsf{e2e}}(\mathcal{B},L) = T^{\mathsf{ma}}(\mathcal{B},L) + T^{\mathsf{ver}},
\end{equation}
where $T^{\mathsf{ver}}$ denotes the batched verification latency in \eqref{eq:batched_validation}.
Substituting~\eqref{eq:ENk_uniform} and~\eqref{eq:e2e_exe_lat} into \eqref{eq:e2e_metric} yields the sum goodput of the Multi-SPIN system, given by
\begin{equation}\label{eq:goodput_uniform}
  \tau(\mathcal{B},L)
  = \frac{K\!\left(1-\alpha^{L+1}\right)}
         {\left[L\,\hat{T}^{\mathsf{ma}}(\mathcal{B}) + T^{\mathsf{ver}}\right](1-\alpha)},
\end{equation}
where $\hat{T}^{\mathsf{ma}}(\mathcal{B}) \triangleq \max_{k}\!\bigl\{T_k^{\mathsf{S}} + Q_{\mathsf{tok}}/(B_k r_k)\bigr\}$ denotes the per-token multi-access latency.
Maximizing this goodput requires jointly controlling the draft length and the bandwidth allocation, giving rise to the multi-access draft control problem, formulated as
\begin{align*}
  (\mathrm{P1})\quad
  \max_{\mathcal{B}, \, L} &\quad \tau(\mathcal{B},L) \notag\\
  \mathrm{s.t.} &\quad L \in \mathbb{Z}_{+}, \notag\\
  &\quad B_k > 0,\;\forall\, k, \notag\\
  &\quad \textstyle\sum_{k} B_k \le B,
\end{align*}
where $\mathcal{B}=\{B_k\}_{k=1}^{K}$ collects the per-device bandwidth allocations and $B$ denotes the total system bandwidth budget. 
In problem~(P1), the integer variable $L$ and the continuous variables $\{B_k\}$ are coupled through the straggler-limited latency in~\eqref{eq:Tde}. 
The next section exploits this structure to decouple bandwidth allocation from draft-length control and derive a closed-form optimal solution.

\section{Multi-access Draft Control}
\label{sec:draft_length_teba}

In this section, we assume uniform draft lengths across all users to facilitate batched processing at the server. Under this assumption, we design an optimal multi-access draft control algorithm by solving the goodput maximization problem~(P1). To design an efficient algorithm, we adopt a decomposition approach that optimally separates the original problem into two distinct sub-problems, both of which are subsequently solved in closed form.

\subsection{Optimal Problem Decomposition}
\label{subsec:problem_decoupling}

This subsection decouples problem~(P1) by exploiting the structure of the goodput expression in~\eqref{eq:goodput_uniform}. 
A key observation is that the per-token multi-access latency $\hat{T}^{\mathsf{ma}}(\mathcal{B})$ is the only term through which $\mathcal{B}$ affects the goodput. Specifically, the numerator of~\eqref{eq:goodput_uniform} depends only on $L$, whereas the denominator depends on $\mathcal{B}$ only through $\hat{T}^{\mathsf{ma}}(\mathcal{B})$. 
For any fixed $L>0$, $\tau(\mathcal{B},L)$ is monotonically decreasing in $\hat{T}^{\mathsf{ma}}$. 
Therefore, maximizing the goodput with respect to $\mathcal{B}$ is equivalent to minimizing $\hat{T}^{\mathsf{ma}}$. 
This observation leads to the following decomposition of problem~(P1).

\subsubsection{Bandwidth Allocation}
The bandwidth allocation problem seeks to minimize the per-token multi-access latency.
Since the constant factor $L$ in the multi-access latency $T^{\mathsf{ma}}(\mathcal{B},L)=L\,\hat{T}^{\mathsf{ma}}(\mathcal{B})$ does not affect the minimizer over~$\mathcal{B}$, the sub-problem is formulated as a min-max optimization problem, given by
\begin{equation*}
\begin{aligned}
(\mathrm{P1.1})\quad
\min_{\mathcal{B}}\quad & \hat{T}^{\mathsf{ma}} (\mathcal{B})\\
\mathrm{s.t.}\quad
& \sum_{k} B_k \le B,\\
& B_k > 0, \;\forall\, k.
\end{aligned}
\label{eq:inner_simplified}
\end{equation*}
Let $\mathcal{B}^\star$ denote the optimal solution of~(P1.1) and $\vartheta^\star \triangleq \hat{T}^{\mathsf{ma}} (\mathcal{B}^\star)$ the corresponding minimum per-token multi-access latency.

\subsubsection{Draft-Length Control}
With $\vartheta^\star$ determined, all dependence on $\mathcal{B}$ is eliminated, and the goodput in~\eqref{eq:goodput_uniform} reduces to a univariate function of $L$:
\begin{equation}
\tau(L)=\frac{K\!\left(1-\alpha^{L+1}\right)}{(L\,\vartheta^\star+T^{\mathsf{ver}})(1-\alpha)}.
\label{eq:throughput_obj}
\end{equation}
The multi-access draft control problem reduces to the draft-length control problem, given by
\begin{equation*}
\begin{aligned}
(\mathrm{P1.2})\quad
\max_{L \in \mathbb{Z}_+}\quad &
\tau(L).
\end{aligned}
\label{eq:uniform_opt_problem}
\end{equation*}

The two decoupled Problems~(P1.1) and~(P1.2) are solved in closed form in Sections~\ref{subsec:teba_commonL_calibrated} and~\ref{subsec:sec4_opt_commonL}, respectively.

\subsection{Optimal Bandwidth Allocation}
\label{subsec:teba_commonL_calibrated}

This subsection solves the bandwidth allocation problem~(P1.1) obtained from the decoupling in Sec.~\ref{subsec:problem_decoupling}.
Problem~(P1.1) minimizes the maximum of $K$ terms $T_k^{\mathsf{S}}+Q_{\mathsf{tok}}/(B_k r_k)$, each depending on a single bandwidth variable $B_k$.
Owing to this separable structure, the unique optimum is attained when all $K$ terms are equalized~\cite{yang2026optimal}, as stated in Lemma~\ref{prop:teba_solution}.

\begin{Lemma}[Optimal Bandwidth Allocation]
\label{prop:teba_solution}
Problem~(P1.1) admits a unique optimal bandwidth allocation, given by 
\begin{equation}
B_k^\star=\frac{Q_{\mathsf{tok}}}{r_k\!\left(\vartheta^\star-T_k^{\mathsf{S}}\right)},
\label{eq:bk_star}
\end{equation}
where $\vartheta^\star$ is determined as the unique root of
\begin{equation}
\sum_{k=1}^{K}\frac{Q_{\mathsf{tok}}}{r_k\!\left(\vartheta^\star-T_k^{\mathsf{S}}\right)}=B,
\qquad \vartheta^\star>\max_k \ T_k^{\mathsf{S}}.
\label{eq:vartheta_star_root}
\end{equation}
\end{Lemma}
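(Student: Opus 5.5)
The plan is to prove the lemma by treating (P1.1) through its epigraph form and using a monotonicity argument. No general duality theory is needed.

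\emph{Existence and uniqueness of the root.} Define
\[
g(\vartheta)\triangleq\sum_{k=1}^{K}\frac{Q_{\mathsf{tok}}}{r_k\left(\vartheta-T_k^{\mathsf{S}}\right)}
\]
on the interval $\vartheta>\max_k T_k^{\mathsf{S}}$. Each summand is continuous, positive and strictly decreasing there, because $r_k>0$ for any $B_k>0$; I treat $r_k$ as a fixed spectrum efficiency, as the paper's notation $B_k r_k$ suggests. As $\vartheta\downarrow\max_k T_k^{\mathsf{S}}$, at least one summand diverges to $+\infty$. As $\vartheta\to\infty$, every summand tends to $0$. The intermediate value theorem and strict monotonicity then give a unique root $\vartheta^\star$ of $g(\vartheta)=B$ in this interval. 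The allocation $B_k^\star$ in \eqref{eq:bk_star} is well defined and positive, sums exactly to $B$, and equalizes all $K$ terms $T_k^{\mathsf{S}}+Q_{\mathsf{tok}}/(B_k^\star r_k)=\vartheta^\star$. Hence it is feasible and attains $\hat{T}^{\mathsf{ma}}(\mathcal{B}^\star)=\vartheta^\star$.

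\emph{Optimality.} Take any feasible $\mathcal{B}$ and set $t=\hat{T}^{\mathsf{ma}}(\mathcal{B})$. For every $k$, $t\ge T_k^{\mathsf{S}}+Q_{\mathsf{tok}}/(B_k r_k)$, so $t>T_k^{\mathsf{S}}$ and $B_k\ge Q_{\mathsf{tok}}/(r_k(t-T_k^{\mathsf{S}}))$. Summing over $k$ gives $g(t)\le\sum_k B_k\le B=g(\vartheta^\star)$. Since $g$ is strictly decreasing, this forces $t\ge\vartheta^\star$, so no feasible allocation beats $\vartheta^\star$.

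\emph{Uniqueness of the optimizer.} Suppose $\mathcal{B}$ also attains $t=\vartheta^\star$. Then every inequality in the chain above holds with equality. In particular, $\sum_k B_k$ is bounded below by $g(\vartheta^\star)=B$ through the termwise bounds $B_k\ge B_k^\star$, and bounded above by $B$. Therefore each termwise bound $B_k\ge B_k^\star$ must be tight, so $B_k=B_k^\star$ for all $k$.

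\emph{Main obstacle.} The step needing care is the modeling assumption that $r_k$ does not depend on $B_k$. If instead $r_k=\log_2(1+p_kH_k/(N_0B_k))$, I would replace $Q_{\mathsf{tok}}/(B_k r_k)$ by $Q_{\mathsf{tok}}/R_k(B_k)$. Here $R_k$ is strictly increasing in $B_k$, since $B\log(1+c/B)$ is increasing. I would then repeat the argument with the inverse function $R_k^{-1}$, which preserves the monotonicity and the same equalization structure. In that case the stated closed form should be read with $r_k$ evaluated at $B_k^\star$.
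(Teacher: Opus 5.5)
Your proof is correct and uses the same latency-equalization principle as the paper. The paper gives no separate proof: it asserts that the separable min-max structure forces all $K$ terms $T_k^{\mathsf S}+Q_{\mathsf{tok}}/(B_k r_k)$ to be equal, citing an external result. Your self-contained argument supplies what the paper omits: the unique root of the strictly decreasing $g$, the bound $g(t)\le B$ for any feasible allocation giving $t\ge\vartheta^\star$, and termwise tightness of $B_k\ge B_k^\star$ for uniqueness of the optimizer. Treating $r_k$ as independent of $B_k$ matches the paper's implicit model, since its experiments use a constant transmit power spectral density. Your fallback for fixed $p_k$ is therefore not needed.
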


The allocation~\eqref{eq:bk_star} satisfies
$T_k^{\mathsf{S}}+Q_{\mathsf{tok}}/(B_k^\star r_k)=\vartheta^\star$ for all $k$, meaning that the optimal bandwidth allocation equalizes the per-token multi-access latency across all devices, regardless of their heterogeneous computation times $\{T_k^{\mathsf S}\}$ and channel conditions $\{r_k\}$.
Moreover, $\vartheta^\star$ decreases monotonically with the total bandwidth budget $B$, as verified by
$\frac{d\vartheta^\star}{dB}
=
\left(-\sum_{k=1}^K \frac{Q_{\mathsf{tok}}}{r_k(\vartheta^\star-T_k^{\mathsf S})^2}\right)^{-1}
< 0$.
It confirms that a larger bandwidth budget reduces the communication latency.

\subsection{Optimal Draft Length}
\label{subsec:sec4_opt_commonL}

Building on Lemma~\ref{prop:teba_solution}, this subsection determines the optimal draft length that maximizes the sum goodput.
We solve the draft-length control problem~(P1.2) by first establishing the unimodality of the goodput and then deriving a closed-form optimum via continuous relaxation.

Since problem~(P1.2) is an integer program, we relax $L\in\mathbb{Z}_+$ to a continuous variable $\tilde{L}>0$ to enable closed-form analysis.
The relaxed goodput function is given by
\begin{equation}
\tilde{\tau}(\tilde{L})=\frac{K\left(1-\alpha^{\tilde{L}+1}\right)}{(\tilde{L}\,\vartheta^\star+T^{\mathsf{ver}})(1-\alpha)}, \quad \tilde{L}>0.
\label{eq:continuous_obj}
\end{equation}
Then, the maximization of $\tilde{\tau}(\tilde{L})$ in~\eqref{eq:continuous_obj} is formulated as
\begin{equation*}
\begin{aligned}
(\mathrm{P1.3}) \quad \max_{\tilde{L} > 0} \;\; & \tilde{\tau}(\tilde{L}).
\end{aligned}
\label{eq:uniform_opt_problem_approx}
\end{equation*}

\begin{figure}[t!]
    \centering
    \subfigure[Llama-2 Pair]{
        \includegraphics[width=0.45\columnwidth]{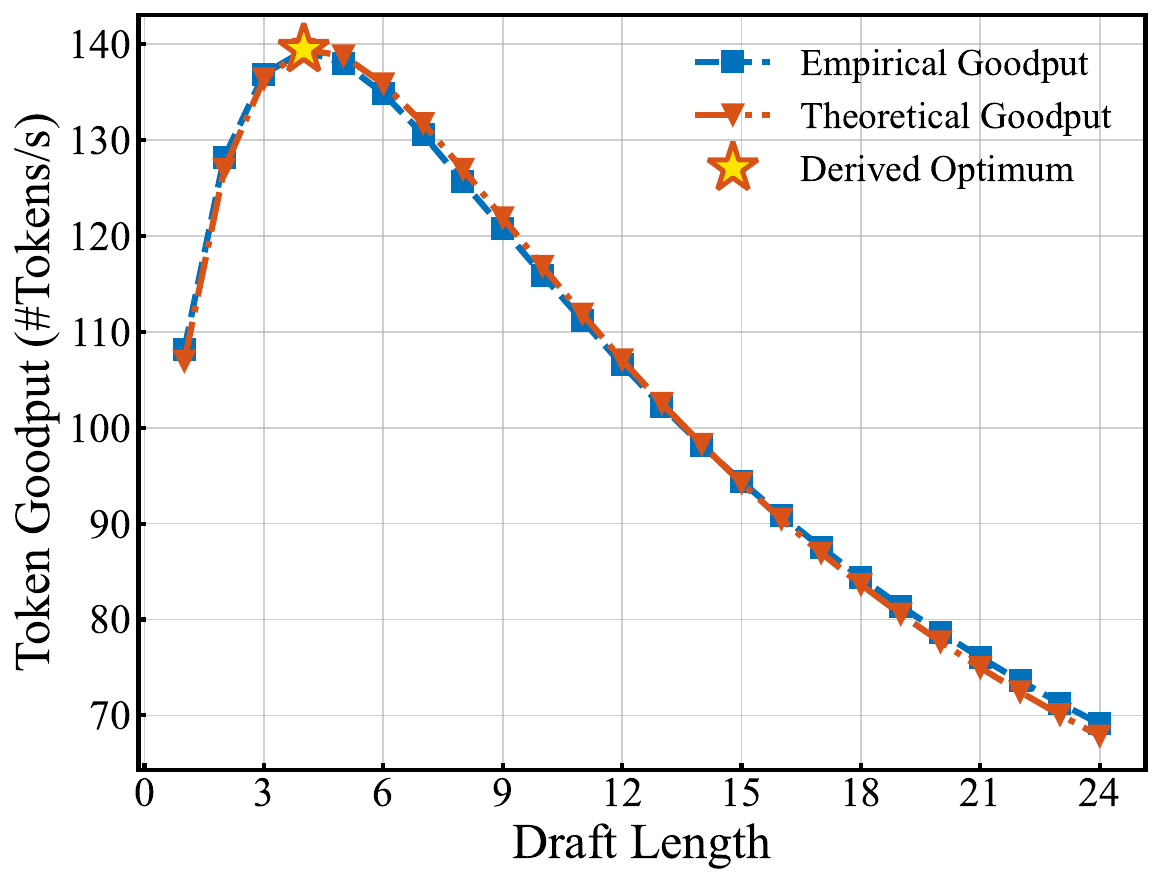}
    }
    \hfill
    \subfigure[Qwen3.5 Pair]{
        \includegraphics[width=0.45\columnwidth]{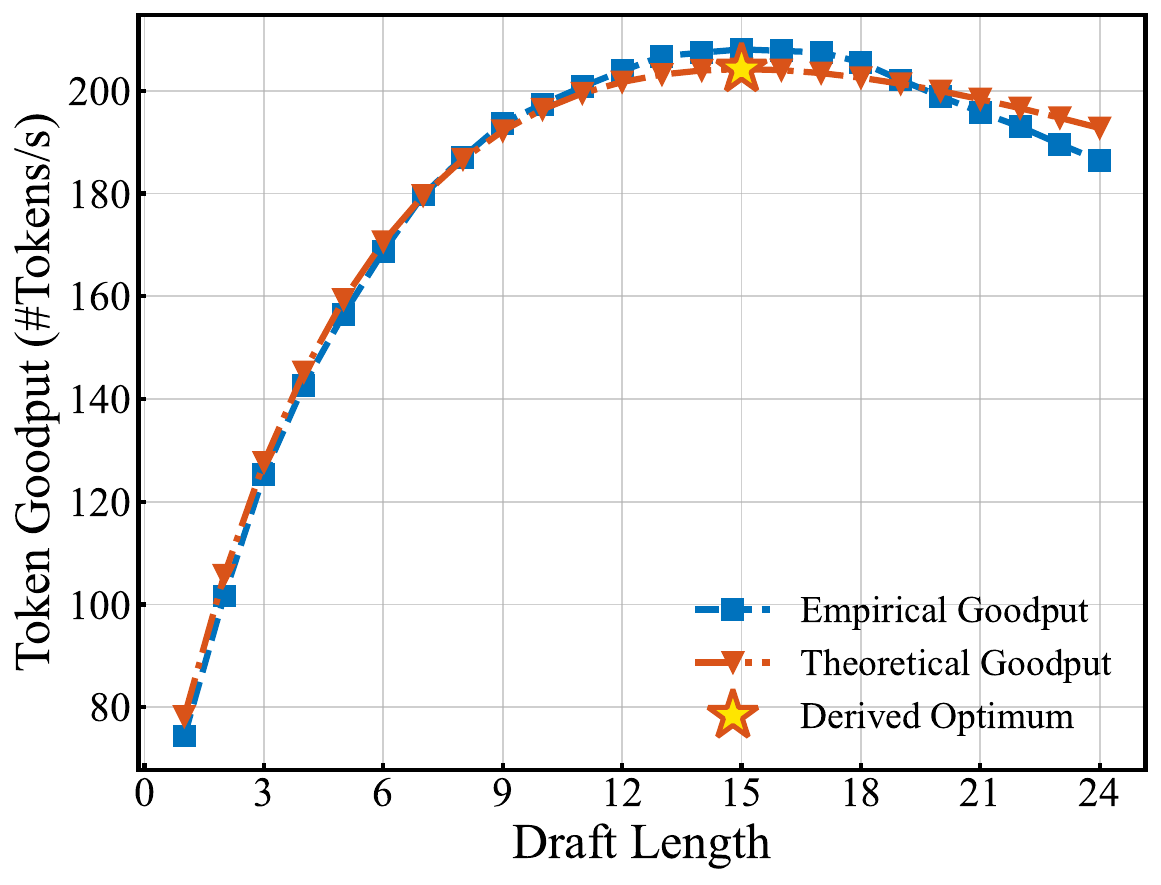}
    }
    \caption{Empirical and theoretical sum token goodput versus draft length for Llama-2 and Qwen3.5 models on the GSM8K dataset.}
    \label{fig:trade_l}
\end{figure}

As illustrated in Fig.~\ref{fig:trade_l}, the sum token goodput exhibits a \emph{content--latency tradeoff} with respect to the draft length $L$.
The close match between the empirical and theoretical curves validates the accuracy of the derived goodput model.
Increasing $L$ admits more verified tokens per round, though the marginal gain progressively shrinks, as a drafted token is accepted only if all preceding ones are.
Meanwhile, each additional token adds one unit of per-token multi-access latency $\vartheta^\star$, so the E2E latency grows linearly in $L$.
The goodput numerator thus saturates while its denominator grows linearly, yielding a unimodal function with a unique optimal draft length, as formalized in Theorem~\ref{prop:unimodality_lambert}.

\begin{Theorem}[Unimodality and Optimal Draft Length]
\label{prop:unimodality_lambert}
The continuous goodput $\tilde{\tau}(\tilde{L})$ in~\eqref{eq:continuous_obj} is strictly unimodal for $\tilde{L}>0$.
If $\frac{T^{\mathsf{ver}}}{\vartheta^\star}>\frac{1-\alpha}{\alpha |\ln \alpha|}$, then $\tilde{\tau}(\tilde{L})$ admits a unique interior maximizer, and the optimal integer draft length $L^\star$ is obtained by
\begin{equation}
\label{eq:integer_closed_form}
    L^\star = \arg\max_{L \in \{\lfloor \tilde{L}^\star \rfloor,\, \lceil \tilde{L}^\star \rceil\}} \tau(L),
\end{equation}
where the continuous optimum $\tilde{L}^\star$ is given in closed form by
\begin{equation}
    \tilde{L}^\star = -\frac{\ln\!\left( -W_{-1}\!\left( -\alpha^{\frac{T^{\sf{ver}}}{\vartheta^\star}-1}\!/e \right) \right)}{\ln \alpha} - 1,
    \label{eq:closed_form_solution}
\end{equation}
with $W_{-1}(\cdot)$ denoting the lower branch of the Lambert $W$ function and $\vartheta^\star$ obtained from~\eqref{eq:vartheta_star_root}.
Otherwise, $\tilde{\tau}(\tilde{L})$ decreases monotonically in $\tilde{L}>0$, and the optimal draft length reduces to $L^\star=1$.
\end{Theorem}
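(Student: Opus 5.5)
The plan is to study the sign of $d\tilde{\tau}/d\tilde{L}$ through an auxiliary function that turns out to be strictly decreasing. This gives unimodality and the threshold condition together, and solving the stationarity equation with the substitution that produces Lambert-$W$ form gives \eqref{eq:closed_form_solution}.

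\textbf{Step 1 (reduction and derivative).} Since $K/(1-\alpha)$ and $1/\vartheta^\star$ are positive constants, set $c\triangleq T^{\mathsf{ver}}/\vartheta^\star$ and $a\triangleq-\ln\alpha>0$. Maximizing $\tilde{\tau}$ is then equivalent to maximizing
\[
f(\tilde{L})=\frac{1-\alpha^{\tilde{L}+1}}{\tilde{L}+c}.
\]
Differentiating gives $f'(\tilde{L})=g(\tilde{L})/(\tilde{L}+c)^2$, where
\[
g(\tilde{L})\triangleq a\,\alpha^{\tilde{L}+1}(\tilde{L}+c)+\alpha^{\tilde{L}+1}-1 .
\]
A direct computation yields $g'(\tilde{L})=-a^2\alpha^{\tilde{L}+1}(\tilde{L}+c)<0$, because the two terms $\pm a\alpha^{\tilde{L}+1}$ cancel. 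Hence $g$ is strictly decreasing on $\tilde{L}>0$, and $g(\tilde{L})\to-1$ as $\tilde{L}\to\infty$.

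\textbf{Step 2 (unimodality and the threshold).} Because $g$ is strictly decreasing and eventually negative, $f'$ changes sign at most once, from $+$ to $-$. This is strict unimodality. An interior maximizer exists if and only if $g(0^+)>0$, that is, $a\alpha c+\alpha-1>0$, which is exactly $c>(1-\alpha)/(\alpha|\ln\alpha|)$. Otherwise $f'<0$ on $(0,\infty)$, so $\tilde{\tau}$ is decreasing, and the smallest admissible integer $L^\star=1$ is optimal.

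\textbf{Step 3 (closed form via Lambert $W$).} The stationarity condition $g(\tilde{L})=0$ reads $\alpha^{\tilde{L}+1}\bigl(1+a(\tilde{L}+c)\bigr)=1$. Substitute $y\triangleq1+a(\tilde{L}+c)$. Then
\[
\alpha^{\tilde{L}+1}=e^{-a(\tilde{L}+1)}=e^{1-y}\alpha^{1-c},
\]
and the equation becomes $-y\,e^{-y}=-\alpha^{c-1}/e$. Therefore $-y=W(-\alpha^{c-1}/e)$.

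The main care is in checking that the argument lies in $(-1/e,0)$ and in choosing the branch.
\begin{itemize}
\item Using $\ln x\le x-1$ with $x=1/\alpha$ gives $\alpha|\ln\alpha|\le1-\alpha$. So the threshold condition forces $c>1$, hence $\alpha^{c-1}<1$, which places the argument in $(-1/e,0)$.
\item Since $y>1$ (as $\tilde{L}+c>0$), the solution must come from the lower branch, giving $y=-W_{-1}(\cdot)$.
\end{itemize}
Inverting $\alpha^{\tilde{L}+1}=1/y$ then yields $\tilde{L}+1=-\ln(-W_{-1})/\ln\alpha$, which is \eqref{eq:closed_form_solution}.

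\textbf{Step 4 (integer recovery).} Strict unimodality of $\tilde{\tau}$, which agrees with $\tau$ on the integers, implies that $\tau(L)$ is nondecreasing for integers $L\le\lfloor\tilde{L}^\star\rfloor$ and nonincreasing for $L\ge\lceil\tilde{L}^\star\rceil$. The integer optimum is therefore one of these two neighbours, giving \eqref{eq:integer_closed_form}. If $\lfloor\tilde{L}^\star\rfloor=0$, the candidate set is replaced by $\{1\}$.
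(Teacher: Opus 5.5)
Your proof is correct and follows the paper's argument essentially step for step. The sign of $\tilde{\tau}'$ is governed by an auxiliary function $g$, which agrees with the paper's up to the positive factor $\vartheta^\star$ and satisfies $g'=-(\ln\alpha)^2\alpha^{\tilde{L}+1}(\tilde{L}+c)<0$; the threshold is exactly $g(0)>0$, the closed form comes from solving $g=0$ via Lambert $W$, and the integer optimum is found by checking $\lfloor\tilde{L}^\star\rfloor$ and $\lceil\tilde{L}^\star\rceil$. Your checks that the threshold forces $T^{\mathsf{ver}}/\vartheta^\star>1$ (so the argument lies in $(-1/e,0)$), that $y>1$ selects $W_{-1}$, and that the case $\lfloor\tilde{L}^\star\rfloor=0$ needs separate handling are all details the paper leaves implicit.
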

\begin{proof}
    (See Appendix~\ref{proof:Lstar_closed})
\end{proof}

To gain more insights into the optimal solution, we analyze the monotonicity of the optimal draft length $L^\star$ with respect to key system parameters: the verification overhead $T^{\mathsf{ver}}$, the per-token multi-access latency $\vartheta^\star$, and the acceptance rate $\alpha$.
The derivation is provided in Appendix~\ref{app:Lstar_monotonicity}. Fig.~\ref{fig:lstar_param} illustrates how the optimal draft length varies with the key system parameters and motivates the design insights summarized in Remark~\ref{rem:insights_from_closed_form}.

\begin{figure*}[t]
    \centering

    \subfigure[Effect of verification overhead $T^{\mathsf{ver}}$.]{
        \includegraphics[width=0.3\textwidth]{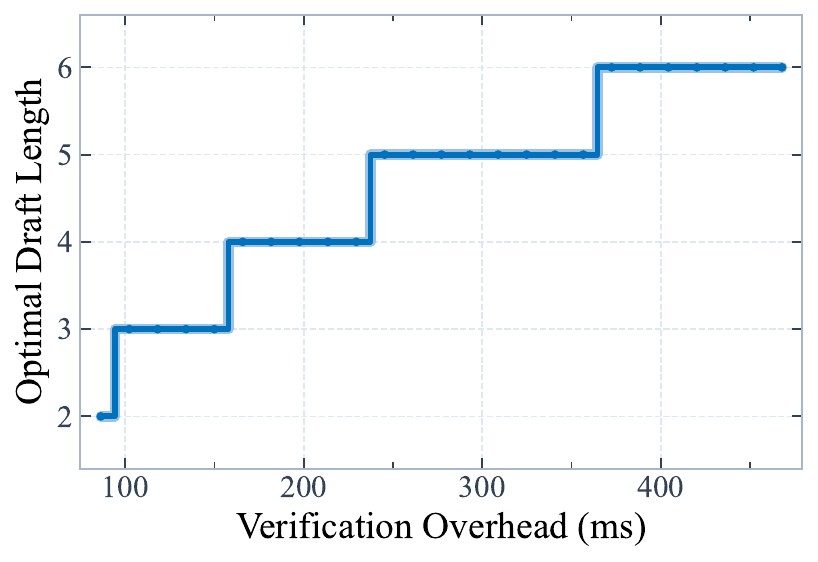}
        \label{fig:lstar_tval}
    }
    \hfill
    \subfigure[Effect of per-token multi-access latency $\vartheta^\star$.]{
        \includegraphics[width=0.3\textwidth]{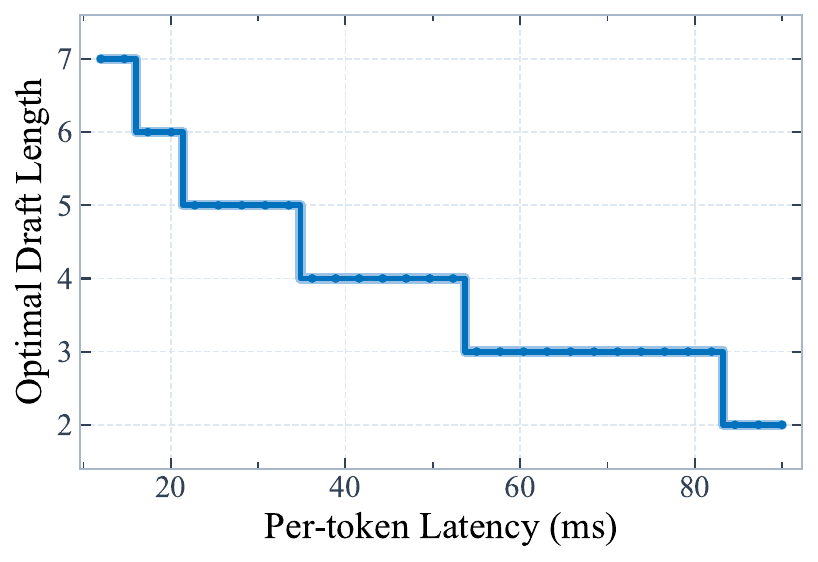}
        \label{fig:lstar_vartheta}
    }
    \hfill
    \subfigure[Effect of acceptance rate $\alpha$.]{
        \includegraphics[width=0.3\textwidth]{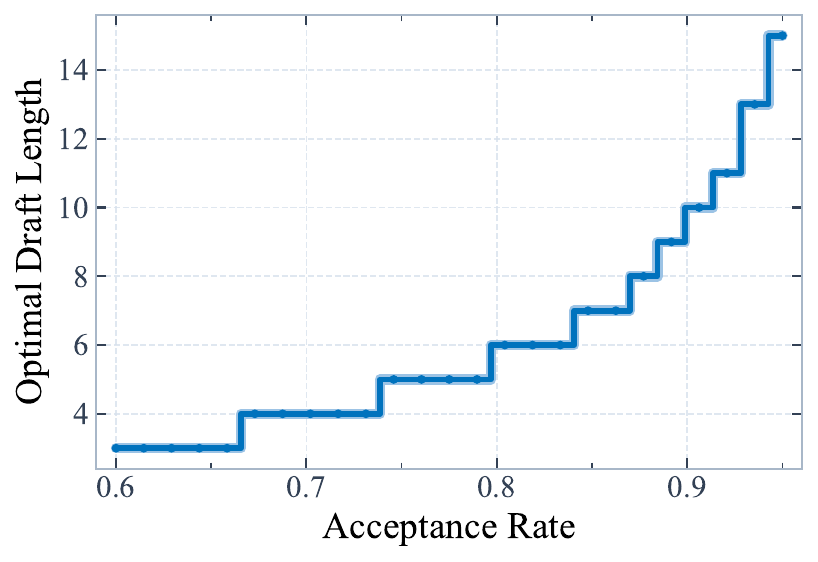}
        \label{fig:lstar_alpha}
    }
    \caption{Optimal uniform draft length under varying system parameters for the Llama-2 Pair. In each subfigure, one parameter is varied while the other parameters are fixed.}
    \label{fig:lstar_param}
\end{figure*}

\begin{Remark}[Effects of System Parameters]
\label{rem:insights_from_closed_form}
The optimal draft length increases with the verification overhead $T^{\mathsf{ver}}$ and decreases with the per-token multi-access latency $\vartheta^\star$, reflecting the need to amortize a costly verification step over more draft tokens when verifying is expensive relative to drafting.
Moreover, the optimal draft length increases with $\alpha$, and as $\alpha\to 1$, $\tilde{L}^\star$ grows as $\mathcal{O}\!\left(1/\sqrt{-\ln\alpha}\right)$, dominating the dependence on $T^{\mathsf{ver}}$ and $\vartheta^\star$. 
This emphasizes the importance of improving the acceptance rate via better SLM--LLM alignment in Multi-SPIN.
\end{Remark}

\section{Multi-access Draft Control with Heterogeneous Lengths}
\label{sec:joint_scheduling}

In this section, we relax the assumption of uniform draft length made in the preceding section and explore the case permitting heterogeneous draft lengths across users. This introduces a new dimension for goodput enhancement but complicates the multi-access draft-length control. To design the optimal control algorithm, the difficulty is overcome by adopting an alternative optimal decomposition approach, which reduces the high-dimensional joint optimization to a low-complexity two-dimensional search.

\subsection{Problem Formulation and Decomposition}
\label{subsec:sec5_formulation}

This subsection formulates the multi-access draft control problem with heterogeneous draft lengths and explains why the previous decoupling strategy is no longer applicable.

Under heterogeneous draft lengths $\mathcal{L}=\{L_k\}_{k=1}^K$ and device-specific acceptance rates $\{\alpha_k\}$, the expected number of accepted tokens varies across devices.
The aggregate expected number of accepted tokens over one round is given by
\begin{equation}\label{eq:ENk_het}
  \sum_{k=1}^{K}\mathbb{E}[N_k\mid L_k]
  = \sum_{k=1}^{K}\frac{1-\alpha_k^{L_k+1}}{1-\alpha_k}.
\end{equation}
Since the draft length differs across devices, the draft lengths cannot be extracted from the multi-access latency. 
Accordingly, the multi-access latency in~\eqref{eq:Tde} generalizes to
\begin{equation}\label{eq:Tde_het}
  T^{\mathsf{ma}}(\mathcal{B},\mathcal{L})
  = \max_{k}\; \left\{L_k\!\left(T_k^{\mathsf{S}}+\frac{Q_{\mathsf{tok}}}{B_k\,r_k}\right)\right\}.
\end{equation}
The corresponding E2E latency of one Multi-SPIN round is then given by $T^{\sf{e2e}}(\mathcal{B},\mathcal{L})=T^{\mathsf{ma}}(\mathcal{B},\mathcal{L})+T^{\sf{ver}}$.
Substituting the above expressions into~\eqref{eq:e2e_metric} yields the sum goodput in the heterogeneous draft-length setting, given by
\begin{equation}
    \label{eq:general_goodput}
    \tau(\mathcal{B},\mathcal{L}) = \frac{ \sum_{k} \mathbb{E}[N_k\mid L_k] }{T^{\sf ma}(\mathcal{B},\mathcal{L})+T^{\sf{ver}}}.
\end{equation}
Extending the multi-access draft control problem to heterogeneous draft lengths yields its general form, formulated as
\begin{align*}
  (\mathrm{P2})\quad
  \max_{\mathcal{B},\,\mathcal{L}} &\quad \tau(\mathcal{B},\mathcal{L}) \notag\\
  \mathrm{s.t.} &\quad L_k\in\mathbb{Z}_{+},\;\;\forall\,k, \notag\\
  &\quad B_k>0,\;\;\forall\,k, \notag\\
  &\quad \textstyle\sum_{k}B_k\le B.
\end{align*}

Problem~(P2) is a \emph{mixed-integer nonlinear program} (MINLP) for which a brute-force solution approach is NP-hard.
The decomposition method for problem~(P1) is no longer optimal since the heterogeneous draft lengths in~\eqref{eq:Tde_het} couple draft lengths and bandwidths inside the $\max$ operator, such that the bandwidth allocation can no longer be optimized independently. 
Nevertheless, an alternative method for optimal decomposition is possible as shown in Lemma~\ref{lem:seq_opt}. 
It is based on the observation that, in the objective in~\eqref{eq:general_goodput}, only the multi-access latency term, $T^{\sf{ma}}(\mathcal{B}, \mathcal{L})$, depends on the bandwidth allocation, $\mathcal{B}$.

\begin{Lemma}[Optimality of Decomposition]
\label{lem:seq_opt}
Since the constraints of problem~(P2) on $\mathcal{B}$ and $\mathcal{L}$ are decoupled, the joint optimization decomposes without loss of optimality into the outer problem over the draft lengths,
\begin{align*}
  (\mathrm{P2.1})\quad
  \max_{\mathcal{L}} &\quad \tau\!\bigl(\mathcal{B}^\star(\mathcal{L}),\,\mathcal{L}\bigr) \notag\\
  \mathrm{s.t.} &\quad L_k\in\mathbb{Z}_{+},\;\;\forall\,k,
\end{align*}
and the inner problem over the bandwidth allocation,
\begin{align*}
  (\mathrm{P2.2})\quad
  \min_{\mathcal{B}} &\quad T^{\mathsf{ma}}(\mathcal{B},\mathcal{L}) \notag\\
  \mathrm{s.t.} &\quad B_k>0,\;\;\forall\,k, \notag\\
  &\quad \textstyle\sum_{k}B_k\le B,
\end{align*}
where $\mathcal{B}^\star(\mathcal{L})$ denotes the optimal bandwidth allocation of the inner problem~(P2.2).
\end{Lemma}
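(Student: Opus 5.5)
The argument is the standard "max over a product set equals iterated max" identity, combined with the monotonicity of the goodput in the multi-access latency. Write the feasible set of (P2) as $\mathcal{F}=\mathcal{Z}\times\mathcal{C}$. Here $\mathcal{Z}=\mathbb{Z}_+^K$ collects the draft lengths, and $\mathcal{C}=\{\mathcal{B}: B_k>0,\ \sum_k B_k\le B\}$ collects the bandwidth allocations. Because the constraints are decoupled, this set is a Cartesian product, so
\begin{equation*}
\sup_{(\mathcal{B},\mathcal{L})\in\mathcal{F}}\tau(\mathcal{B},\mathcal{L})=\sup_{\mathcal{L}\in\mathcal{Z}}\ \sup_{\mathcal{B}\in\mathcal{C}}\tau(\mathcal{B},\mathcal{L}).
\end{equation*}
This is elementary. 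Each side bounds the other: any feasible pair is included in the iterated supremum, and every inner candidate is jointly feasible.

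\textbf{Step 1: reduce the inner maximization to (P2.2).} Fix $\mathcal{L}$. By~\eqref{eq:general_goodput}, the numerator $\sum_k\mathbb{E}[N_k\mid L_k]$ does not depend on $\mathcal{B}$, and it is strictly positive since each term is at least $1$. The denominator is $T^{\sf ma}(\mathcal{B},\mathcal{L})+T^{\sf ver}$, which is positive. Hence $\tau$ is a strictly decreasing function of $T^{\sf ma}(\mathcal{B},\mathcal{L})$. Maximizing $\tau$ over $\mathcal{C}$ is therefore equivalent to minimizing $T^{\sf ma}$ over $\mathcal{C}$, which is exactly (P2.2), and the two problems share the same maximizers/minimizers.

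\textbf{Step 2: show the inner optimum $\mathcal{B}^\star(\mathcal{L})$ exists.} This is the main technical point, because $\mathcal{C}$ is not closed ($B_k>0$). I would handle it as follows.
\begin{itemize}
\item Each term $L_k(T_k^{\sf S}+Q_{\sf tok}/(B_k r_k))$ blows up as $B_k\to0^+$. So the infimum can be restricted to the compact set $\{B_k\ge\epsilon,\ \sum_k B_k\le B\}$ for a small enough $\epsilon>0$.
\item On that compact set, $T^{\sf ma}$ is continuous, so the minimum is attained.
\item Alternatively, exactly as in Lemma~\ref{prop:teba_solution}, the minimum is attained by equalizing the terms $L_k(T_k^{\sf S}+Q_{\sf tok}/(B_k r_k))$ with the budget tight. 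This is the same argument with $T_k^{\sf S}$ replaced by $L_kT_k^{\sf S}$ and $Q_{\sf tok}$ replaced by $L_kQ_{\sf tok}$, so it gives a well-defined $\mathcal{B}^\star(\mathcal{L})$.
\end{itemize}

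\textbf{Step 3: conclude.} By Steps 1 and 2, $\sup_{\mathcal{B}}\tau(\mathcal{B},\mathcal{L})=\tau(\mathcal{B}^\star(\mathcal{L}),\mathcal{L})$, and substituting into the product-set identity yields (P2.1). For optimality in both directions, let $\mathcal{L}^\star$ solve (P2.1). Then $(\mathcal{B}^\star(\mathcal{L}^\star),\mathcal{L}^\star)$ is feasible for (P2), and its value is at least $\tau(\mathcal{B},\mathcal{L})$ for every feasible pair. Conversely, any optimal pair of (P2) has its $\mathcal{B}$ optimal for (P2.2) at its own $\mathcal{L}$, so its $\mathcal{L}$ solves (P2.1).

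Existence of a maximizer for (P2.1) over the integers is not needed for the equivalence itself. If wanted, it follows because the numerator is bounded by $\sum_k 1/(1-\alpha_k)$ while $T^{\sf ma}\ge\max_k L_kT_k^{\sf S}$ grows without bound. Hence only finitely many $\mathcal{L}$ can exceed the value of $\tau$ at $\mathcal{L}=(1,\dots,1)$, and the supremum is attained over that finite set.
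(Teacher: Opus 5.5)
Your proposal is correct and follows essentially the paper's own justification: the Cartesian-product structure of the constraints, combined with the fact that for fixed $\mathcal{L}$ the goodput $\tau$ depends on $\mathcal{B}$ only through $T^{\mathsf{ma}}(\mathcal{B},\mathcal{L})$ and is decreasing in it. Your existence arguments, for the inner minimizer $\mathcal{B}^\star(\mathcal{L})$ and for an outer maximizer over the integers, add rigor that the paper leaves implicit; the paper instead obtains the inner minimizer constructively in Lemma~\ref{lem:optimal_bandwidth}.
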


The two sub-problems are solved in turn: Section~\ref{subsec:sec5_deba} solves the inner problem~(P2.2) to obtain $\mathcal{B}^\star(\mathcal{L})$, and Section~\ref{subsec:sec5_optL_het} substitutes it into the outer problem~(P2.1) to determine the optimal draft lengths.

\subsection{Optimal Bandwidth Allocation}
\label{subsec:sec5_deba}

This subsection characterizes the structure of optimal bandwidth allocation for given draft lengths and uses it to eliminate the dependence on the multi-dimensional variable $\mathcal{B}$ in problem~(P2).
For a given set of draft lengths, the aggregate expected number of accepted tokens in~\eqref{eq:ENk_het} is fixed, so maximizing the goodput~$\tau(\mathcal{B},\mathcal{L})$ over~$\mathcal{B}$ reduces to minimizing the multi-access latency~$T^{\mathsf{ma}}(\mathcal{B},\mathcal{L})$ in~\eqref{eq:Tde_het}.

Solving the inner problem~(P2.2), the optimal bandwidth allocation exhibits a \emph{latency-equalization} structure: at optimality, all devices have an equalized multi-access latency, i.e., the minimum latency of problem~(P2.2).
Consequently, the optimal bandwidth allocated to each device is fully determined by this equalized latency and the draft lengths, as provided in Lemma~\ref{lem:optimal_bandwidth}.

\begin{Lemma}
[Structure of Optimal Bandwidth Allocation]
\label{lem:optimal_bandwidth}
For any set of draft lengths $\mathcal{L}$, the optimal bandwidth allocation that solves problem~(P2.2) is given by
\begin{equation}
B_k(\mathcal{L})
=
\frac{Q_{\mathsf{tok}}L_k}{r_k\!\left(\varphi-L_k T_k^{\mathsf S}\right)},
\qquad \forall k,
\label{eq:sec5_Bk_star}
\end{equation}
where $\varphi$ is the equalized multi-access latency, determined as the unique root of
\begin{equation}
\sum_{k=1}^{K}\frac{Q_{\mathsf{tok}}L_k}{r_k\!\left(\varphi-L_k T_k^{\mathsf S}\right)}=B,
\qquad \varphi>\max_{k} L_k T_k^{\mathsf S}.
\label{eq:sec5_varphi_root}
\end{equation}
\end{Lemma}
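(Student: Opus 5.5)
The plan is to reduce the min--max problem~(P2.2) to the same structure as~(P1.1) and reuse the argument behind Lemma~\ref{prop:teba_solution}. For fixed $\mathcal{L}$, write the latency of device $k$ as
\[
g_k(B_k)\triangleq L_kT_k^{\mathsf S}+\frac{Q_{\mathsf{tok}}L_k}{B_k r_k}.
\]
Each $g_k$ depends only on its own $B_k$. It is continuous and strictly decreasing on $(0,\infty)$, tends to $+\infty$ as $B_k\to 0^+$, and tends to $L_kT_k^{\mathsf S}$ as $B_k\to\infty$. I will prove two facts in order: existence and uniqueness of $\varphi$, and then optimality and uniqueness of the equalizing allocation.

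\textbf{Step 1 (the root $\varphi$).} Define
\[
F(\varphi)=\sum_{k}\frac{Q_{\mathsf{tok}}L_k}{r_k(\varphi-L_kT_k^{\mathsf S})}
\quad\text{on } \varphi>\max_k L_kT_k^{\mathsf S}.
\]
Each summand is positive, continuous and strictly decreasing. As $\varphi$ approaches $\max_k L_kT_k^{\mathsf S}$ from above, at least one summand diverges, so $F\to+\infty$. As $\varphi\to\infty$, $F\to 0$. By the intermediate value theorem and strict monotonicity, $F(\varphi)=B$ has exactly one root in this domain. Inverting $g_k(B_k)=\varphi$ gives exactly~\eqref{eq:sec5_Bk_star}, and these $B_k$ are positive and sum to $B$.

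\textbf{Step 2 (optimality by exchange/contradiction).} Take any feasible $\mathcal{B}'$ and suppose $\max_k g_k(B_k')<\varphi$. Then every $g_k(B_k')<\varphi=g_k(B_k(\mathcal{L}))$. Since each $g_k$ is strictly decreasing, this forces $B_k'>B_k(\mathcal{L})$ for every $k$. Summing gives $\sum_k B_k'>B$, which violates the bandwidth constraint. Hence $\varphi$ is the minimum value of~(P2.2).

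\textbf{Step 3 (uniqueness).} Suppose $\mathcal{B}'\neq\mathcal{B}(\mathcal{L})$ also attains $\varphi$. Then $g_k(B_k')\le\varphi$ for all $k$, so $B_k'\ge B_k(\mathcal{L})$ for all $k$. Because the allocations differ, the inequality is strict for at least one $k$. This again yields $\sum_k B_k'>B$, a contradiction. The same argument also shows the budget constraint is tight at the optimum.

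The only delicate point is the boundary behaviour in Step~1. The domain must be $\varphi>\max_k L_kT_k^{\mathsf S}$, not just $\varphi>0$, so that every allocation is positive, and the blow-up near the left endpoint has to be argued. Otherwise the proof is a direct analogue of Lemma~\ref{prop:teba_solution}, with $T_k^{\mathsf S}$ replaced by $L_kT_k^{\mathsf S}$ and $Q_{\mathsf{tok}}$ replaced by $Q_{\mathsf{tok}}L_k$.
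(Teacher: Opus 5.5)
Your proof is correct and takes the same latency-equalization route as the paper, which gives no separate proof of this lemma; it carries over the structure of Lemma~\ref{prop:teba_solution} (whose equalization property is attributed to~\cite{yang2026optimal}) under the substitutions $T_k^{\mathsf S}\to L_kT_k^{\mathsf S}$ and $Q_{\mathsf{tok}}\to Q_{\mathsf{tok}}L_k$. Your Steps~1--3 (the unique root of the strictly decreasing budget function, the exchange argument that any strictly better allocation would exceed $B$, and uniqueness with a tight budget) make that argument explicit.
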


Lemma~\ref{lem:optimal_bandwidth} provides two structural insights for the subsequent draft-length control.
First, both \(\varphi\) and \(B_k(\mathcal{L})\) are monotonically increasing with respect to each individual draft length \(L_k\). This follows by implicitly differentiating the equalized-latency relation~\eqref{eq:sec5_varphi_root}, which gives $\partial\varphi^\star/\partial L_j>0$, and substituting the result into~\eqref{eq:sec5_Bk_star} to obtain $\partial B_j^\star/\partial L_j>0$ for $K\ge2$.
Hence, assigning a longer draft length to any device increases the equalized latency \(\varphi\) and requires more bandwidth for that device to prevent it from becoming the straggler. 
Second, this lemma reduces problem~(P2) from the $2K$ variables $\mathcal{B}$ and $\mathcal{L}$ to only $K+1$, the equalized latency $\varphi$ and the draft lengths $\mathcal{L}$, laying the foundation for the further reduction in the next subsection.

\subsection{Optimal Draft Control}
\label{subsec:sec5_optL_het}
Building on Lemma~\ref{lem:optimal_bandwidth}, this subsection solves for the structure of optimal draft lengths under the reduced problem formulation.
Substituting the bandwidth--draft-length relationship in~\eqref{eq:sec5_Bk_star} into the goodput eliminates the explicit dependence on~$\mathcal{B}$.
Since Lemma~\ref{lem:optimal_bandwidth} equalizes all multi-access latencies to~$\varphi$, the E2E latency reduces to $\varphi + T^{\mathsf{ver}}$, and the goodput becomes
\begin{equation}
\label{eq:throughput_length}
    \tau(\mathcal{L})=\frac{\sum_k \mathbb{E}[N_k\mid L_k]}{\varphi + T^{\mathsf{ver}}},
\end{equation}
where $\sum_k \mathbb{E}[N_k\mid L_k]$ is the aggregate expected output in~\eqref{eq:ENk_het}, and the equalized latency $\varphi$ is implicitly determined by $\mathcal{L}$ through the bandwidth constraint~\eqref{eq:sec5_varphi_root}.

Consequently, the sum goodput $\tau$ depends only on $\mathcal{L}$, and the outer problem~(P2.1) reduces to~(P2.1a) below,
\begin{subequations}\label{eq:sec5_P5}
\begin{align}
(\mathrm{P2.1 a}) \quad
\max_{\mathcal{L}} \;\; & \tau\left(\mathcal{L}\right) \notag \\
\text{s.t.} \;\; & L_k \in \mathbb{Z}_{+}, \;\; \forall k,
\label{eq:sec5_P5_c1} \\
& \sum_{k=1}^{K}\frac{Q_{\mathsf{tok}}L_k}{r_k(\varphi-L_k T_k^{\mathsf S})} = B,
\label{eq:sec5_P5_c2} \\
& 0 < L_k < \frac{\varphi}{T_k^{\mathsf S}}, \quad \forall k,
\label{eq:sec5_P5_c3}
\end{align}
\end{subequations}
where~\eqref{eq:sec5_P5_c2} encodes the equalized-latency condition from Lemma~\ref{lem:optimal_bandwidth}, and~\eqref{eq:sec5_P5_c3} ensures positive draft lengths and bandwidths.

We next analyze the problem~(P2.1a) and relax it to enable a tractable solution.
Problem~(P2.1a) is a MINLP whose intractability stems from two sources: the integer constraint on $L_k$ and the nonlinear equality constraint~\eqref{eq:sec5_P5_c2}, which implicitly couples all draft lengths through the shared variable~$\varphi$.
To obtain a tractable reformulation, we apply two relaxations: (i)~relaxing the integer draft lengths to continuous variables $\tilde{L}_k >0$, and (ii)~relaxing the equality in~\eqref{eq:sec5_P5_c2} to the inequality
\begin{equation}
\label{eq:relax_constraint}
    \sum_{k=1}^{K}\frac{Q_{\mathsf{tok}}\tilde{L}_k}{r_k(\varphi-\tilde{L}_k T_k^{\mathsf S})} \le B.
\end{equation}
The relaxations do not alter the optimal solution, as proved in Appendix~\ref{app:pf_tightness_bandwidth}.
The resulting relaxed problem is given by
\begin{equation*}
\begin{aligned}
(\mathrm{P2.1b}) \quad
\max_{\tilde{\mathcal{L}}} \;\; & \tau(\tilde{\mathcal{L}}) \\
\text{s.t.} \;\; & \tilde{L}_k > 0, \;\; \forall k,\\
& \eqref{eq:sec5_P5_c3}\; \& \;\eqref{eq:relax_constraint}.
\end{aligned}
\label{eq:sec5_P5_relax}
\end{equation*}

We now derive the structure of optimal draft lengths under the relaxed problem.
For any fixed $\varphi$, problem~(P2.1b) maximizes a strictly concave objective over a convex feasible set, so the \emph{Karush--Kuhn--Tucker}~(KKT) conditions are necessary and sufficient for global optimality~\cite{boyd2004convex}.
Solving these conditions yields a closed-form structure of the optimal draft lengths, as stated in Proposition~\ref{prop:optimal_lengths}.

\begin{Proposition}[Optimal Draft Length]
\label{prop:optimal_lengths}
For a given equalized latency $\varphi$ and KKT multiplier $\lambda > 0$, the optimal integer draft length for device $k$ is obtained by
\begin{equation}
    L_k(\varphi,\lambda) = \operatorname{round} \left(\tilde{L}_{k}(\varphi, \lambda)\right),
\label{eq:rounding_length}
\end{equation}
where the continuous optimum is given by
\begin{equation}
\tilde{L}_{k}(\varphi,\lambda)
=
\frac{\varphi}{T_k^{\mathsf S}}
+\frac{2}{\ln \alpha_k}\,
W_0\!\left(
\frac{\alpha_k^{-\frac{\varphi}{2 T_k^{\mathsf S}}}}{2 T_k^{\mathsf S}}
\sqrt{\frac{\lambda\, Q_{\mathsf{tok}} \,\varphi\, |\ln \alpha_k|}{r_k\, \alpha_k\,(1-\alpha_k)^{-1}}}
\right).
\label{eq:sec5_Lk_closed}
\end{equation}
Here, $W_0(\cdot)$ denotes the principal branch of the Lambert $W$ function.
The two scalar variables $\varphi$ and $\lambda$ are uniquely determined by the KKT stationarity condition and the active bandwidth constraint in~\eqref{eq:sec5_varphi_root}.
\end{Proposition}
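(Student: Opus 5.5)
The plan is to fix the equalized latency $\varphi$ and treat (P2.1b) as a separable concave maximization with one coupling constraint, the relaxed bandwidth inequality~\eqref{eq:relax_constraint}. For fixed $\varphi$ the denominator $\varphi+T^{\mathsf{ver}}$ of $\tau$ is constant, so it suffices to maximize $\sum_k (1-\alpha_k^{\tilde L_k+1})/(1-\alpha_k)$.

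\textbf{Step 1: convexity.} Each summand is strictly concave in $\tilde L_k$: its second derivative is $-\alpha_k^{\tilde L_k+1}(\ln\alpha_k)^2/(1-\alpha_k)<0$. Each constraint term $g_k(\tilde L_k)=Q_{\mathsf{tok}}\tilde L_k/\bigl(r_k(\varphi-\tilde L_kT_k^{\mathsf S})\bigr)$ is convex and increasing on $0<\tilde L_k<\varphi/T_k^{\mathsf S}$. Indeed $g_k'=Q_{\mathsf{tok}}\varphi/\bigl(r_k(\varphi-\tilde L_kT_k^{\mathsf S})^2\bigr)$, which is itself increasing. Slater's condition holds for small $\tilde L_k$, so the KKT conditions are necessary and sufficient, as already noted before the proposition. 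By the tightness result of Appendix~\ref{app:pf_tightness_bandwidth}, the multiplier $\lambda$ is positive and the constraint is active. Because $g_k\to\infty$ as $\tilde L_k\to\varphi/T_k^{\mathsf S}$, the open box constraints~\eqref{eq:sec5_P5_c3} are never active.

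\textbf{Step 2: stationarity.} The KKT stationarity condition reads
\begin{equation*}
\frac{\alpha_k^{\tilde L_k+1}|\ln\alpha_k|}{1-\alpha_k}=\frac{\lambda Q_{\mathsf{tok}}\varphi}{r_k(\varphi-\tilde L_kT_k^{\mathsf S})^2}.
\end{equation*}
I would substitute $u_k=\varphi-\tilde L_kT_k^{\mathsf S}>0$, so that $\alpha_k^{\tilde L_k}=\alpha_k^{\varphi/T_k^{\mathsf S}}\alpha_k^{-u_k/T_k^{\mathsf S}}$. Multiplying through by $u_k^2$ and taking square roots (both sides are positive) gives
\begin{equation*}
u_k\,\alpha_k^{-u_k/(2T_k^{\mathsf S})}=\alpha_k^{-\varphi/(2T_k^{\mathsf S})}\sqrt{\frac{\lambda Q_{\mathsf{tok}}\varphi(1-\alpha_k)}{r_k\alpha_k|\ln\alpha_k|}}.
\end{equation*}
Next set $w_k=u_k|\ln\alpha_k|/(2T_k^{\mathsf S})>0$. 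Then $\alpha_k^{-u_k/(2T_k^{\mathsf S})}=e^{w_k}$, and multiplying by $|\ln\alpha_k|/(2T_k^{\mathsf S})$ yields $w_ke^{w_k}=z_k$. Here $z_k$ is exactly the argument of $W_0$ in~\eqref{eq:sec5_Lk_closed}. Since $z_k>0$, the equation has the unique solution $w_k=W_0(z_k)$ on the principal branch. Undoing the substitution, $\tilde L_k=\varphi/T_k^{\mathsf S}-2w_k/|\ln\alpha_k|$, which is~\eqref{eq:sec5_Lk_closed} because $\ln\alpha_k<0$.

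\textbf{Step 3: determining $(\varphi,\lambda)$.} For fixed $\varphi$, the left side of the active constraint $\sum_k g_k(\tilde L_k(\varphi,\lambda))=B$ is strictly decreasing in $\lambda$. This holds because larger $\lambda$ gives larger $z_k$, hence larger $w_k$ and smaller $\tilde L_k$. The left side also tends to $0$ or $\infty$ at the endpoints of $\lambda\in(0,\infty)$, so $\lambda(\varphi)$ is unique. The remaining scalar $\varphi$ is then fixed by stationarity of $\tau$ with respect to $\varphi$, i.e., the outer one-dimensional optimality condition, which together with~\eqref{eq:sec5_varphi_root} pins down both scalars. I would not attempt a closed form here, only uniqueness via this monotonicity.

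\textbf{Main obstacle: the integer step.} Justifying $\operatorname{round}(\cdot)$ is the hard part, because the bandwidth constraint couples the devices and rounding up can violate it. My approach is to argue that, after rounding, $\varphi$ is recomputed from~\eqref{eq:sec5_varphi_root}, so feasibility is restored automatically. Strict concavity of each per-device Lagrangian term in $\tilde L_k$ then implies that the best integer for that term is one of the two neighbours of $\tilde L_k$. If exact optimality cannot be secured, I would state the rounding as the standard near-optimal integer recovery, mirroring the floor/ceiling choice in Theorem~\ref{prop:unimodality_lambert}.
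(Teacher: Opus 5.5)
Your Step~2 is essentially the paper's proof in Appendix~\ref{app:pf_prop_optimal_lengths}. It uses the same stationarity condition, the same substitution $y_k=\varphi-\tilde L_kT_k^{\mathsf S}$, and the same reduction to $w_ke^{w_k}=z_k$ solved on the principal branch. Your algebra is correct, including the identification of $z_k$ with the argument of $W_0$ in \eqref{eq:sec5_Lk_closed}.

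The material around Step~2 goes beyond the paper, whose proof simply asserts that the constraint is active, that $(\varphi,\lambda)$ is unique, and that rounding is optimal. A few of your claims there need repair.

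\emph{Constraint activity.} At fixed $\varphi$ the bandwidth constraint is active because the objective is strictly increasing in every $\tilde L_k$. It is not a consequence of Appendix~\ref{app:pf_tightness_bandwidth}, whose argument perturbs $\varphi$, which is held fixed here (the paper's proof cites it the same way).

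\emph{Box constraints.} Only the upper box constraint is automatically inactive. With $c_k=|\ln\alpha_k|\varphi/(2T_k^{\mathsf S})$, one has $\tilde L_k>0$ iff $W_0(z_k)<c_k$. This fails for a device with large $z_k$ (e.g.\ a poor channel $r_k$), so the formula needs clipping there, to $L_k\ge 1$ in the integer problem. Your limit ``$\to 0$'' as $\lambda\to\infty$ in Step~3 also presumes that clipping.

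\emph{Uniqueness of $\varphi$.} Your monotonicity argument fixes $\lambda(\varphi)$, but not $\varphi$; saying that stationarity in $\varphi$ ``pins down'' both scalars is only an assertion. One way to close it:
\begin{enumerate}
\item The constraint is invariant under $(\tilde L,\varphi)\mapsto(t\tilde L,t\varphi)$, and its slice at $\varphi=1$ is convex, so the joint feasible set is a convex cone.
\item Hence the inner value $F(\varphi)=\max_{\tilde L}\sum_k(1-\alpha_k^{\tilde L_k+1})/(1-\alpha_k)$ is concave in $\varphi$.
\item It is strictly concave: the objective is strictly concave in $\tilde L$, and two different values of $\varphi$ cannot share an optimizer, because the active constraint is strictly decreasing in $\varphi$.
\item Therefore $F(\varphi)/(\varphi+T^{\mathsf{ver}})$ has at most one maximizer, and $\tilde L$ and then $\lambda$ follow uniquely.
\end{enumerate}

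\emph{Rounding.} Concavity of each per-device Lagrangian term at fixed $(\varphi,\lambda)$ only gives a floor/ceiling candidate for that term. It does not show that nearest-integer rounding solves the coupled integer problem. Lagrangian relaxation of an integer program can have a duality gap, and recomputing $\varphi$ after rounding changes every term. Your fallback, presenting rounding as near-optimal recovery, is the accurate claim. It matches how the paper itself describes the output of Algorithm~\ref{alg:joint_scheduling}.
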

\begin{proof}
(See Appendix~\ref{app:pf_prop_optimal_lengths}.)
\end{proof}

Proposition~\ref{prop:optimal_lengths} characterizes the heterogeneous draft lengths through the device-specific parameters $(\alpha_k, T_k^{\mathsf S}, r_k)$ and the two shared scalars $(\varphi, \lambda)$.
More importantly, when combined with the structure of optimal bandwidth in Lemma~\ref{lem:optimal_bandwidth}, it reveals a qualitative shift in the bandwidth allocation principle, as summarized in Remark~\ref{rem:insights_independent}.

\begin{Remark}[Bandwidth Allocation under Heterogeneous Draft Lengths]
\label{rem:insights_independent}
As proved in Appendix~\ref{app:pf_alpha_bandwidth_ordering}, the optimal bandwidth $B_k^\star$ is strictly increasing in $\alpha_k$.
In contrast to the uniform-length regime, where bandwidth compensates devices with weaker C$^2$ capabilities (See Lemma~\ref{prop:teba_solution}), the optimal bandwidth under the heterogeneous-length regime favors devices with higher acceptance rates.
\end{Remark}

\begin{algorithm}[tb]
\caption{Multi-access Draft Control Algorithm}
\label{alg:joint_scheduling}
\begin{algorithmic}[1]
\REQUIRE System parameters $B$, $Q_{\mathsf{tok}}$, $T^{\mathsf{ver}}$; device parameters $\{T_k^{\mathsf S}, r_k, \alpha_k\}_{k=1}^K$; bounded search grids $\Phi\subseteq[\underline{\varphi},\overline{\varphi}]$ and $\Lambda\subseteq[\underline{\lambda},\overline{\lambda}]$.
\STATE Initialize $\tau^\star\leftarrow 0$.
\FOR{each $(\varphi,\lambda)\in \Phi\times\Lambda$}
    \STATE Compute draft lengths $\mathcal{L}=\{L_k\}_{k=1}^K$ via~\eqref{eq:sec5_Lk_closed} and~\eqref{eq:rounding_length}.
    \STATE Find the root $\hat{\varphi}$ of~\eqref{eq:sec5_varphi_root}; if none exists, \textbf{continue}.
    \STATE Compute $\tau\leftarrow\tau(\mathcal{L})$ via~\eqref{eq:throughput_length}.
    \IF{$\tau>\tau^\star$}
        \STATE Update $(\tau^\star,\varphi^\star,\mathcal{L}^\star)\leftarrow(\tau,\hat{\varphi},\mathcal{L})$.
    \ENDIF
\ENDFOR
\STATE Compute $\mathcal{B}^\star$ via~\eqref{eq:sec5_Bk_star} using $(\varphi^\star,\mathcal{L}^\star)$.
\ENSURE $\mathcal{L}^\star$ and $\mathcal{B}^\star$.
\end{algorithmic}
\end{algorithm}

\subsection{Joint Optimization Algorithm}
\label{subsec:sec5_algorithm}
We now assemble the closed-form sub-problem solutions into a joint algorithm for problem~(P2), summarized in Algorithm~\ref{alg:joint_scheduling}.
In the input, we initialize the system parameters, device-specific parameters, and the search grids $\Phi$ and $\Lambda$ specified in Appendix~\ref{app:search_ranges}.
For each candidate pair $(\varphi,\lambda)$, the algorithm computes the draft lengths using Proposition~\ref{prop:optimal_lengths} (Step~3), checks feasibility by evaluating the bandwidth constraint~\eqref{eq:sec5_varphi_root} under the equalized-latency structure (Step~4), and updates the current best solution whenever a higher goodput is attained (Steps~5--8).
The optimal heterogeneous draft lengths and bandwidth allocations for problem~(P2) are accordingly given by
\begin{equation}
\label{eq:final_Lk_star}
    L_k^\star = \operatorname{round}\!\left(\tilde{L}_k(\varphi^\star,\lambda^\star)\right), \quad \forall\, k,
\end{equation}
where $\tilde{L}_k(\cdot)$ is defined in~\eqref{eq:sec5_Lk_closed}, and
\begin{equation}
\label{eq:final_Bk_star}
    B_k^\star = \frac{Q_{\mathsf{tok}}\,L_k^\star}{r_k\!\left(\varphi^\star - L_k^\star\, T_k^{\mathsf{S}}\right)}, \quad \forall\, k,
\end{equation}
which is obtained from Lemma~\ref{lem:optimal_bandwidth} (Step~10).

Solving problem~(P2) directly over its $2K$ variables incurs a complexity of $\mathcal{O}(L_{\max}^{K}K)$, where $L_{\max}$ denotes the maximum admissible draft length per device.
Nevertheless, the decomposition removes this exponential dependence.
Specifically, combining Lemma~\ref{lem:optimal_bandwidth} and Proposition~\ref{prop:optimal_lengths} reduces the $2K$ variables to the two scalars $(\varphi,\lambda)$, first from $2K$ to $K+1$ by Lemma~\ref{lem:optimal_bandwidth} and then to $(\varphi,\lambda)$ by Proposition~\ref{prop:optimal_lengths}, with each feasible $(\varphi,\lambda)$ mapping to a unique solution $(\mathcal{L},\mathcal{B})$.
A two-dimensional grid search over $(\varphi,\lambda)$ therefore recovers the near-optimal solution at a complexity of $\mathcal{O}(|\Phi|\,|\Lambda|\,K)$.
Our algorithm replaces the exponential factor $L_{\max}^{K}$ with the grid size $|\Phi|\,|\Lambda|$, scaling linearly with the number of devices.

\section{Experimental Results}
\label{sec:experiment_results}

\subsection{Experiment Settings}
\label{sec:exp_setting}
\subsubsection{Models and Tasks}
We implement SPIN with two model pairs: (i) TinyLlama-1.1B as the on-device SLM paired with Llama-2-7B as the server-side LLM, and (ii) Qwen3.5-0.8B as the on-device SLM paired with Qwen3.5-27B as the server-side LLM~\cite{touvron2023llama,qwen3.5}.
All models are initialized from pre-trained checkpoints available on Hugging Face.\footnote{The adopted checkpoints are publicly available in the Hugging Face repositories \texttt{TinyLlama/TinyLlama-1.1B}, \texttt{meta-llama/Llama-2-7b}, \texttt{Qwen/Qwen3.5-0.8B}, and \texttt{Qwen/Qwen3.5-27B}.}
The prompts for each device are sampled i.i.d.\ from a mixture of datasets covering diverse task categories to capture heterogeneous workloads across devices.
\begin{itemize}
    \item Task-type 1 (Code Generation): MBPP+~\cite{austin2021program}, 378 sanitized Python problems with augmented test cases.
    \item Task-type 2 (Mathematical Reasoning): GSM8K~\cite{cobbe2021training}, grade-school multi-step math word problems.
    \item Task-type 3 (Multi-turn Dialogue): MT-Bench~\cite{zheng2023judging}, 80 two-turn dialogue questions across diverse categories.
    \item Task-type 4 (Reading Comprehension): SQuAD~\cite{rajpurkar2016squad}, span-extraction questions on Wikipedia passages.
\end{itemize}
For each dataset $d$, an empirical acceptance rate $\alpha^{(d)}$ is estimated by running SPIN on a set of sampled prompts and averaging the realized token-level acceptance probabilities in~\eqref{eq:accept_prob}.
The estimates are summarized in Table~\ref{tab:acceptance_rate_tasks}. 
Device $k$ is then assigned $\alpha_k=\alpha^{(d_k)}$ based on its sampled dataset label $d_k$, resulting in heterogeneous $\{\alpha_k\}$ as defined in~\eqref{eq:alpha_def}.

\begin{table}[t]
\centering
\caption{Empirical acceptance rate (mean $\pm$ std) across prompts for each dataset and SLM--LLM pair, computed by averaging realized acceptance probabilities in~\eqref{eq:accept_prob} over drafted tokens per prompt.}
\label{tab:acceptance_rate_tasks}
\renewcommand{\arraystretch}{1.08}
\setlength{\tabcolsep}{6pt}
\begin{tabular}{lcc}
\hline
\textbf{Dataset} &
\textbf{Llama2 1.1B\,\&\,7B} &
\textbf{Qwen3.5 0.8B\,\&\,27B} \\
\hline
MBPP+ &
$0.8582 \,\pm\, \mbox{\scriptsize 0.2472}$ &
$0.8100 \,\pm\, \mbox{\scriptsize 0.3413}$ \\
GSM8K &
$0.7390 \,\pm\, \mbox{\scriptsize 0.3133}$ &
$0.9340 \,\pm\, \mbox{\scriptsize 0.2089}$ \\
MT-Bench &
$0.7393 \,\pm\, \mbox{\scriptsize 0.3127}$ &
$0.9318 \,\pm\, \mbox{\scriptsize 0.2232}$ \\
SQuAD &
$0.7126 \,\pm\, \mbox{\scriptsize 0.3333}$ &
$0.9650 \,\pm\, \mbox{\scriptsize 0.1538}$ \\
\hline
\end{tabular}
\end{table}

\subsubsection{Computation Settings}
\label{subsubsec:exp_compute}

For mobile devices, the per-token
SLM inference latency $\bar{T}^{\mathsf S}$ is measured on an Apple M4 Pro GPU.
The inference latency of each device is independently drawn from $[0.85, 1.15]\times \bar{T}^{\mathsf S}$, capturing device-side compute heterogeneity.
The edge server runs on an NVIDIA A100 GPU.
We formulate the batched verification latency as a function of batch size $K$ and fit the affine model in~\eqref{eq:batched_validation}.
The fitted curve and the measured points are reported in Fig.~\ref{fig:batched_validation_fit}.

\begin{figure}[t]
    \centering
    \subfigure[Llama2-7B]{
        \includegraphics[width=0.45\columnwidth]{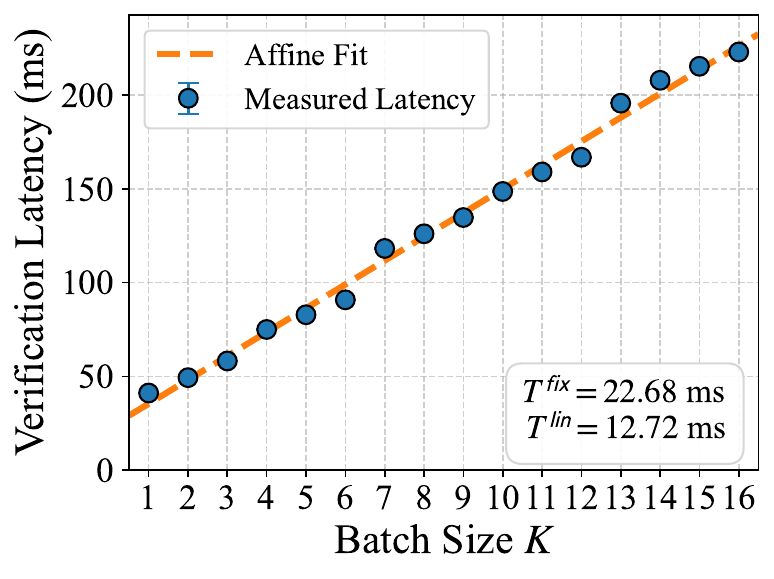}
        \label{fig:llama2_val_fit}
    }
    \hfill
    \subfigure[Qwen3.5-27B]{
        \includegraphics[width=0.45\columnwidth]{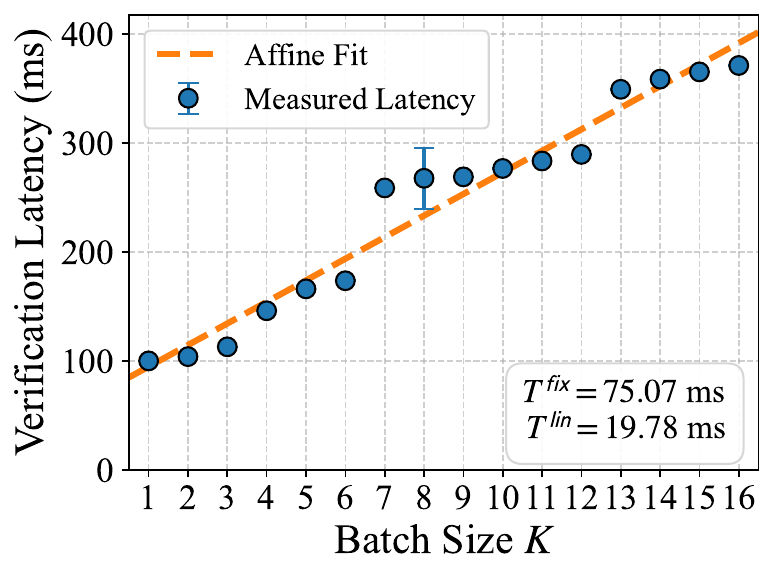}
        \label{fig:qwen_val_fit}
    }
    \caption{Batched verification latency $T^{\sf ver}$ as a function of the batch size $K$. The solid points represent the measured latency on the NVIDIA A100 GPU, and the dashed lines denote the fitted affine models for (a) Llama2-7B and (b) Qwen3.5-27B.}
    \label{fig:batched_validation_fit}
\end{figure}

\subsubsection{Communication Settings}
\label{subsubsec:exp_comm}
Unless otherwise specified, we consider a single-cell OFDMA uplink with $K=20$ devices sharing a total bandwidth of $B=10$\,MHz.
The retained vocabulary size is fixed at $|\hat{\mathcal{V}}|=1024$ for all experiments.
Each device transmits with a constant power spectral density, where the total power budget is $P = 23$\,dBm.
The noise power spectral density is $N_0 = -170$\,dBm/Hz.
The uplink channels follow independent block Rayleigh fading across Multi-SPIN rounds, and the average channel power gains $\{\bar{H}_k\}$ are drawn independently and uniformly at random, yielding average received SNRs in $[18.2,\,22.2]$\,dB.


\begin{figure}[t]
    \centering
    \includegraphics[width=0.75\columnwidth]{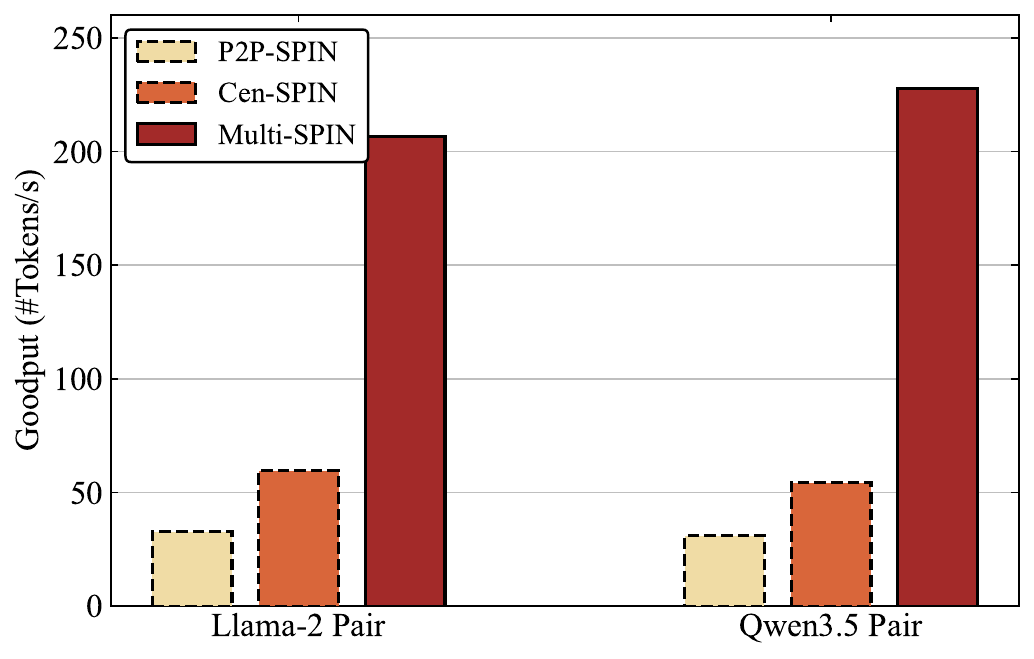}
    \caption{Optimal sum token goodput comparison among P2P-SPIN, Cen-SPIN, and the proposed Multi-SPIN framework for the Llama-2 and Qwen3.5 model pairs.}
    \label{fig:throughput_comparison}
\end{figure}

\subsubsection{Benchmarking Schemes}
We compare Multi-SPIN against two protocol baselines and three optimization baselines, so as to isolate the architectural gain of the Multi-SPIN protocol from the algorithmic gain of its joint bandwidth-and-draft-length optimization. In the optimization comparison, the complete proposed scheme is denoted \emph{Hete-Multi-SPIN} to emphasize that it assigns heterogeneous draft lengths across devices.
For all reported results, the sum goodput is averaged over 50 prompts per device and 10 independent realizations of channel and computational conditions.
In all exhaustive searches for draft-length optimization, the search space is restricted to \(L \in \{1,\ldots,L_{\max}\}\) with \(L_{\max}=25\).

We consider the following protocols as alternatives to the Multi-SPIN architecture.
\begin{itemize}
    \item \emph{P2P-SPIN:} A point-to-point SPIN baseline with a single device, where the device drafts using its local SLM and uploads logits for server verification. The entire bandwidth budget \(B\) is allocated to the single uplink, and the draft length is selected by exhaustive search.
    \item \emph{Cen-SPIN:} A centralized SPIN baseline in which both drafting and verification are executed at the server for each prompt from end devices. The draft length is selected by exhaustive search.
\end{itemize}

The following optimization baselines are constructed within the Multi-SPIN framework by imposing constraints on selected variables.
\begin{itemize}
    \item \emph{Fixed Draft Length and Bandwidth (Fixed BW\&L)}: This baseline fixes the draft length at \(\bar{L}=8\) for all devices and allocates bandwidth uniformly, i.e., \(B_k=B/K\) for all \(k\).
    \item \emph{Multi-SPIN with Uniform Bandwidth Allocation (Uni-BW Multi-SPIN)}: This baseline obtains the per-device heterogeneous draft lengths by solving problem~(P2.1a) with the same relaxation-and-rounding procedure as in Multi-SPIN, while the bandwidth is uniformly allocated, i.e., \(B_k=B/K\) for all \(k\).
    \item \emph{Multi-SPIN with Homogeneous Draft Length (Homo-Multi-SPIN)}: This baseline lets all devices adopt a homogeneous draft length \(L\) selected by exhaustive search to maximize the sum goodput, while the bandwidth is optimized by solving the bandwidth allocation problem~(P1.1).
\end{itemize}

\subsection{Central versus Distributed SPIN Deployment}
Fig.~\ref{fig:throughput_comparison} compares the maximum sum goodput achieved by P2P-SPIN, Cen-SPIN, and Multi-SPIN (i.e., Hete-Multi-SPIN) under the Llama-2 and Qwen3.5 model pairs.
Multi-SPIN consistently delivers the highest goodput in both cases.
For the Llama-2 pair, it reaches approximately $145$~tokens/s, corresponding to about \(2.5\times\) and \(4.6\times\) the goodput of Cen-SPIN and P2P-SPIN, respectively.
A similar trend is observed for the Qwen3.5 pair, where Multi-SPIN attains about $153$~tokens/s and outperforms Cen-SPIN by roughly \(3\times\) despite the substantially higher inference overhead of the larger models.
These results demonstrate the protocol-level gain of Multi-SPIN. 
It combines parallel device-side drafting with batched server-side verification, thereby mitigating the server-side drafting bottleneck in Cen-SPIN while avoiding the inefficient sequential verification pattern of P2P-SPIN.

\begin{figure}[t!]
    \centering
    \subfigure[Llama-2 Pair]{
        \label{fig:baseline_bandwidth_gpt}
        \includegraphics[width=0.45\columnwidth]{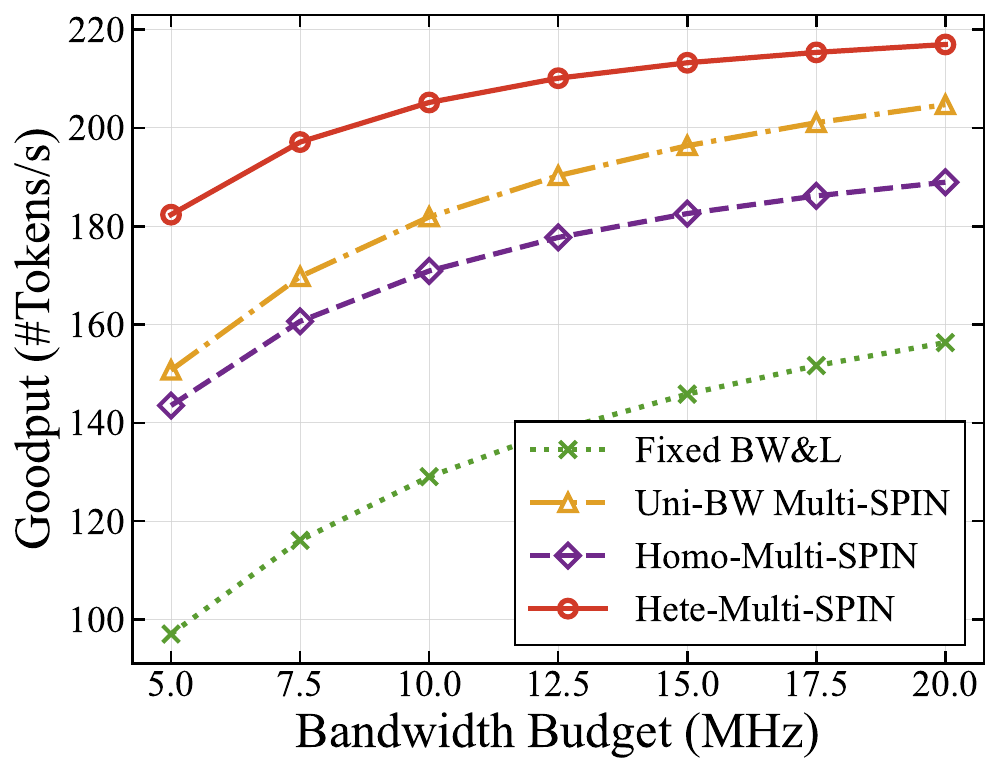}
    }
    \hfill
    \subfigure[Qwen3.5 Pair]{
        \label{fig:baseline_bandwidth_qwen}
        \includegraphics[width=0.45\columnwidth]{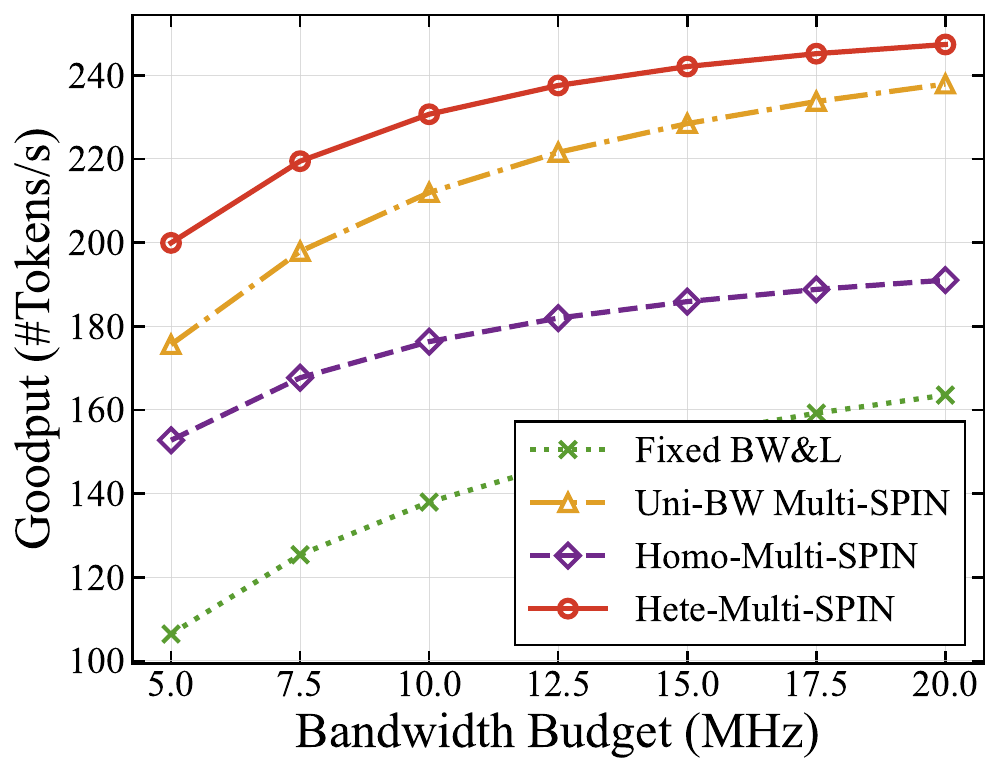}
    }
    \caption{Comparison of sum token goodput across different control schemes under varying bandwidth budgets for both model pairs.}
    \label{fig:baseline_comparison_bandwidth}
\end{figure}

\begin{figure}[t!]
    \centering
    \subfigure[Llama-2 Pair]{
        \label{fig:baseline_user_gpt}
        \includegraphics[width=0.45\columnwidth]{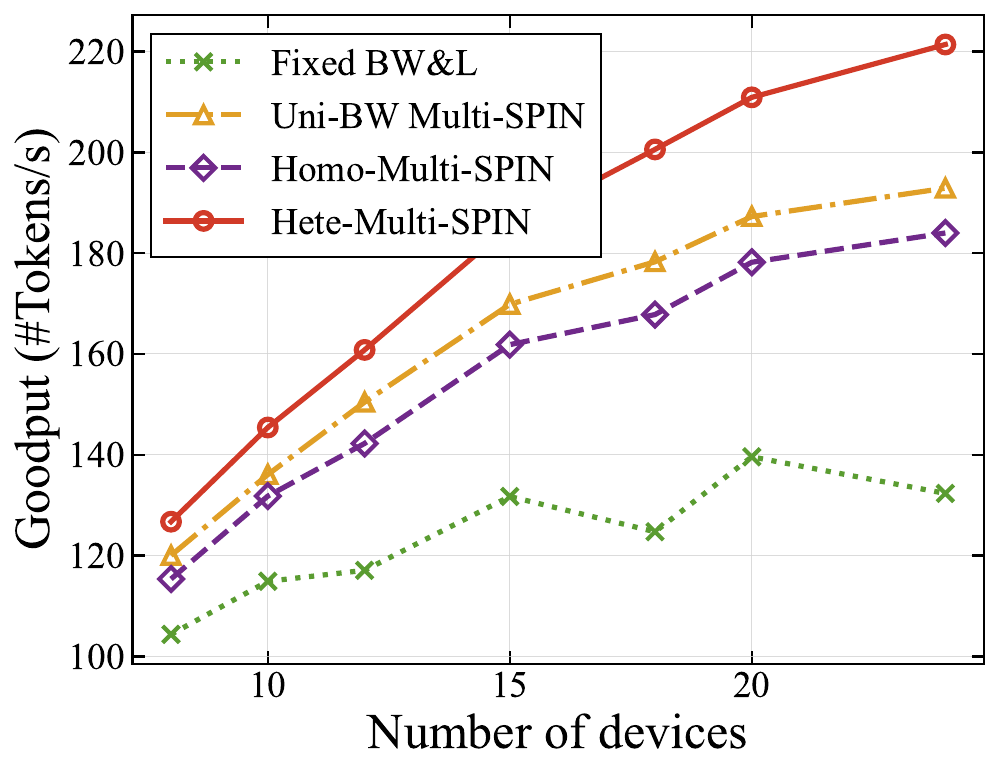}
    }
    \hfill
    \subfigure[Qwen3.5 Pair]{
        \label{fig:baseline_user_qwen}
        \includegraphics[width=0.45\columnwidth]{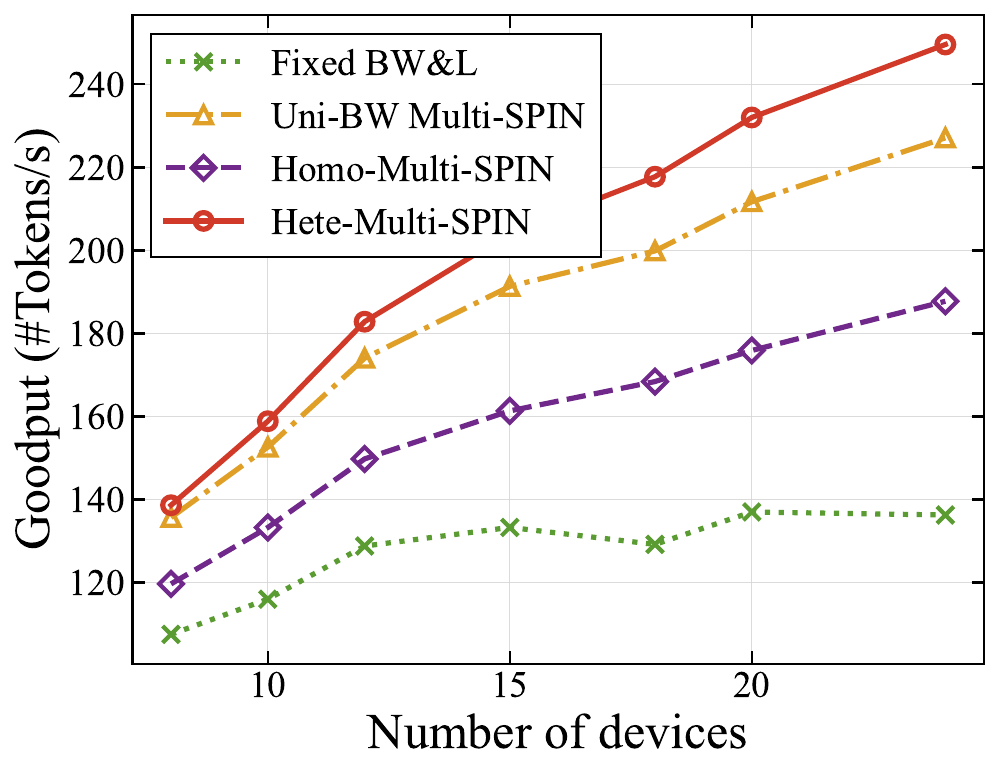}
    }
    \caption{Comparison of sum token goodput across different control schemes under varying numbers of devices for both model pairs.}
    \label{fig:baseline_comparison_user}
\end{figure}

\subsection{Performance of Multi-SPIN Framework}
\label{subsec:exp_framework_perf}

Fig.~\ref{fig:baseline_comparison_bandwidth} compares the sum goodput of Hete-Multi-SPIN and the optimization baselines under varying total bandwidth budgets for the Llama-2 and Qwen3.5 model pairs.
As the budget increases, the goodput of all schemes improves but gradually saturates, marking a transition from a communication-limited regime to a computation-limited one.
Hete-Multi-SPIN attains the highest goodput across both pairs, with its advantage most pronounced in the bandwidth-limited regime, where uniform allocation and fixed draft lengths inflate the multi-access latency under device heterogeneity.
At the smallest bandwidth budget, it improves the goodput over Fixed BW\&L by about \(88\%\) for both model pairs.
This gain narrows as the bandwidth grows and uplink transmission ceases to dominate the E2E latency.
Among the partially optimized baselines, Uni-BW Multi-SPIN consistently outperforms Homo-Multi-SPIN, indicating that adapting draft lengths to device heterogeneity is more beneficial than bandwidth adaptation alone, while jointly optimizing both dimensions yields the largest gain.

Fig.~\ref{fig:baseline_comparison_user} compares the scalability of the schemes as the number of participating devices grows.
Hete-Multi-SPIN scales favorably with more devices, whereas Fixed BW\&L saturates and eventually degrades, producing a widening performance gap.
For the Llama-2 pair, the gain of Hete-Multi-SPIN over Fixed BW\&L grows from about \(21\%\) at small scale to about \(67\%\) at \(K=24\), and for the Qwen3.5 pair from about \(29\%\) to over \(80\%\).
This widening gap arises because uniform resource allocation forces the server to wait for the slowest draft under batching synchronization, an effect that intensifies as device heterogeneity accumulates over a larger pool.
By jointly adapting bandwidth allocation and draft lengths to per-device conditions, Hete-Multi-SPIN suppresses these straggler effects and preserves efficiency at larger system scales.

\section{Concluding Remarks}
\label{sec:conclusions}

This work advocates for the distributed deployment of SPIN to enable cooperative token generation at the network edge. This approach has the advantage of effectively balancing computational loads between edge devices and servers. To materialize this vision, we proposed the Multi-SPIN framework that jointly optimizes multiuser draft lengths and multi-access bandwidth allocation to maximize the sum token goodput. Despite the complexity of these high-dimensional optimization problems, we developed efficient solution strategies via decomposition techniques to successfully derive closed-form solutions for the underlying sub-problems. Thereby, this study establishes draft control as an important mechanism for efficient Multi-SPIN systems by demonstrating its fundamental role in balancing computation loads and communication overhead over devices and servers.

This work opens a new frontier for SPIN-based cooperative token generation at the edge. Among numerous avenues for future investigation, we view the following directions as particularly promising:
\begin{itemize}
    \item \emph{Fairness-Aware Multi-SPIN Control:} Shifting the optimization objective from maximizing the aggregate token goodput to ensuring satisfactory, guaranteed goodput levels for individual devices.
    \item \emph{Adaptive SLM Placement:} Developing mechanisms to dynamically download SLMs from a cloud-based AI library onto edge devices. They can be tailored to users' heterogeneous C$^2$ capabilities and specific task preferences.
    \item \emph{Alternative Multi-Access and Cross-Layer Designs:} Extending the proposed framework beyond OFDMA to incorporate diverse multi-access schemes, such as TDMA, CDMA, and SDMA. This naturally extends to addressing physical-layer challenges (e.g., interference management and adaptive transmission) alongside network-layer issues (e.g., token flow control).
    \item \emph{Advanced Cooperation Strategies:} Investigating more sophisticated cooperation mechanisms, such as task-aware user clustering, to further enhance overarching system efficiency and scalability.
\end{itemize}

\appendix

\subsection{Proof of Theorem~\ref{prop:unimodality_lambert}}
\label{proof:Lstar_closed}

Since the denominator of $\tilde{\tau}'(L)=K\,g(L)/[(1-\alpha)(L\vartheta^\star+T^{\mathsf{ver}})^2]$ is positive, the sign of the derivative of $\tilde{\tau}(L)$ in~\eqref{eq:continuous_obj} is governed by
\[
g(L)=
-\alpha^{L+1}\ln\alpha\,(L\vartheta^\star+T^{\mathsf{ver}})
-\vartheta^\star(1-\alpha^{L+1}).
\]
As $g'(L)=-(\ln\alpha)^2\alpha^{L+1}(L\vartheta^\star+T^{\mathsf{ver}})<0$, $g$ is strictly decreasing, with $g(0)=\alpha |\ln\alpha|T^{\mathsf{ver}}-(1-\alpha)\vartheta^\star$ and $\lim_{L\to\infty} g(L)=-\vartheta^\star<0$. Hence $g$ changes sign at most once and $\tilde{\tau}(L)$ is unimodal, admitting an interior maximizer if and only if $g(0)>0$, i.e., $T^{\mathsf{ver}}/\vartheta^\star>(1-\alpha)/(\alpha|\ln\alpha|)$.
Solving $g(L)=0$ then yields
\[
\tilde{L}^\star=
-\frac{
\ln\!\left(
-W_{-1}\!\left(
-\alpha^{\frac{T^{\mathsf{ver}}}{\vartheta^\star}-1}/e
\right)
\right)
}{\ln\alpha}-1,
\]
and the optimal integer draft length follows by checking the two nearest integers.
Conversely, if $g(0)\le 0$, then $g(L)<0$ for all $L>0$, so $\tilde{\tau}(L)$ is strictly decreasing and its maximum over $L\in\mathbb{Z}_+$ is attained at the boundary $L^\star=1$.

\subsection{Proof of Monotonicity of $\tilde{L}^\star$}
\label{app:Lstar_monotonicity}

Let $t=T^{\mathsf{ver}}/\vartheta^\star$ and $\beta=-\ln\alpha>0$. The stationarity condition in Appendix~\ref{proof:Lstar_closed} is
\[
\beta(\tilde{L}^\star+t)+1=e^{\beta(\tilde{L}^\star+1)}.
\]
Implicit differentiation gives $\partial \tilde{L}^\star/\partial t=1/\bigl(e^{\beta(\tilde{L}^\star+1)}-1\bigr)>0$, hence $\partial \tilde{L}^\star/\partial T^{\mathsf{ver}}>0$ and $\partial \tilde{L}^\star/\partial \vartheta^\star<0$. Differentiating the same condition with respect to $\alpha$ likewise gives $d\tilde{L}^\star/d\alpha>0$, so the optimal draft length increases with the acceptance rate.

Finally, as $\alpha\to 1$, a first-order expansion of the stationarity equation yields
\[
\tilde{L}^\star+1\sim \sqrt{\frac{2(t-1)}{-\ln\alpha}}.
\]

\subsection{Proof of Tightness of Bandwidth Constraint}
\label{app:pf_tightness_bandwidth}

For fixed $L$, the numerator of the relaxed objective is constant, and thus
\[
\tau(L)=\frac{\sum_{k=1}^K \mathbb{E}[N_k\mid L_k]}{\varphi+T^{\mathsf{ver}}}
\]
is strictly decreasing in $\varphi$. Meanwhile, over the feasible region $\varphi>\max_k L_kT_k^{\mathsf S}$, the left-hand side of
\[
\sum_{k=1}^K \frac{Q_{\mathsf{tok}}L_k}{r_k(\varphi-L_kT_k^{\mathsf S})}\le B
\]
is also strictly decreasing in $\varphi$. Therefore, if the constraint were slack at an optimum, one could slightly decrease $\varphi$ while preserving feasibility, which would strictly increase the objective. This contradicts optimality. Hence, the relaxed bandwidth constraint must be active at the optimum.

\subsection{Proof of Proposition~\ref{prop:optimal_lengths}}
\label{app:pf_prop_optimal_lengths}

Consider the relaxed problem~(P2.1b) for a fixed $\varphi$. As proved in Appendix~\ref{app:pf_tightness_bandwidth}, the bandwidth constraint is active at the optimum, so the KKT stationarity condition gives
\[
\frac{-\alpha_k^{\tilde{L}_k+1}\ln\alpha_k}{1-\alpha_k}
=
\lambda\frac{Q_{\mathsf{tok}}\varphi}{r_k(\varphi-\tilde{L}_kT_k^{\mathsf S})^2}.
\]
Let $\beta_k=-\ln\alpha_k$ and $y_k=\varphi-\tilde{L}_kT_k^{\mathsf S}$. Then
\[
y_k^2 e^{\frac{\beta_k}{T_k^{\mathsf S}}y_k}
=
\frac{\lambda Q_{\mathsf{tok}}\varphi(1-\alpha_k)}{r_k\beta_k}
\exp\!\left(\beta_k\left(\frac{\varphi}{T_k^{\mathsf S}}+1\right)\right).
\]
Applying the Lambert $W$ function yields the continuous solution
\[
\tilde{L}_{k}^\star(\varphi, \lambda)
=
\frac{\varphi}{T_k^{\mathsf S}}
+\frac{2}{\ln \alpha_k}\,
W_0\!\left(
\frac{\alpha_k^{-\frac{\varphi}{2 T_k^{\mathsf S}}}}{2 T_k^{\mathsf S}}
\sqrt{\frac{\lambda\, Q_{\mathsf{tok}} \,\varphi\, |\ln \alpha_k|}{r_k\, \alpha_k\,(1-\alpha_k)^{-1}}}
\right).
\]
The integer solution is then obtained by the rounding rule in Proposition~\ref{prop:optimal_lengths}.

\subsection{Proof of Remark~\ref{rem:insights_independent}}
\label{app:pf_alpha_bandwidth_ordering}

It suffices to analyze the continuous optimizer in~\eqref{eq:sec5_Lk_closed}, since the rounding in~\eqref{eq:rounding_length} only affects isolated threshold points.
Let $\beta_k \triangleq -\ln \alpha_k > 0$, $c_k \triangleq \beta_k \varphi/(2T_k^{\mathsf S})$, and
$w_k \triangleq
W_0\!\bigl(
\frac{e^{c_k}}{2T_k^{\mathsf S}}
\sqrt{\lambda Q_{\mathsf{tok}}\varphi\,\beta_k(e^{\beta_k}-1)/r_k}
\bigr)$.
Substituting~\eqref{eq:sec5_Lk_closed} into~\eqref{eq:sec5_Bk_star} and simplifying via $\varphi-\tilde L_kT_k^{\mathsf S}=2T_k^{\mathsf S}w_k/\beta_k$ yields
\[
\widehat B_k
=
\frac{Q_{\mathsf{tok}}}{r_kT_k^{\mathsf S}}
\left(\frac{c_k}{w_k}-1\right).
\]
Since $\tilde L_k>0$ implies $0<w_k<c_k$, it remains to show that $c_k/w_k$ decreases in $\beta_k$.
Differentiating $\ln(c_k/w_k)$ with respect to $\beta_k$ via the identity $W_0'(x)=W_0(x)/[x(1+W_0(x))]$ gives
\[
\frac{\mathrm d}{\mathrm d\beta_k}\ln\!\left(\frac{c_k}{w_k}\right)
=
\frac{1}{\beta_k(1+w_k)}
\left(
w_k-c_k+\frac{1}{2}
-\frac{\beta_k e^{\beta_k}}{2\left(e^{\beta_k}-1\right)}
\right).
\]
Since $w_k<c_k$ and $e^{\beta_k}-1<\beta_k e^{\beta_k}$ for all $\beta_k>0$, the right-hand side is strictly negative, so $\mathrm d\widehat B_k/\mathrm d\beta_k<0$. Because $\alpha_k=e^{-\beta_k}$, $\widehat B_k$ is strictly increasing in $\alpha_k$.

\subsection{Practical Search Ranges for $\varphi$ and $\lambda$}
\label{app:search_ranges}

For Algorithm~\ref{alg:joint_scheduling}, we adopt the following bounded search ranges for $\varphi$ and $\lambda$:
\[
\underline{\varphi}
=
\max_k\left(
T_k^{\mathsf S}+\frac{Q_{\mathsf{tok}}}{B r_k}
\right),
\qquad
\overline{\varphi}
=
\max_k
L_{\max}\!\left(
T_k^{\mathsf S}+\frac{KQ_{\mathsf{tok}}}{B r_k}
\right),
\]
where $L_{\max}$ is the prescribed maximum draft length used in the experiments, and
\[
\underline{\lambda}=\epsilon_\lambda,
\qquad
\overline{\lambda}
=
\max_k
\left[
\frac{r_k(\overline{\varphi}-T_k^{\mathsf S})^2}{Q_{\mathsf{tok}}\overline{\varphi}}
\cdot
\frac{-\ln\alpha_k}{1-\alpha_k}\,
\alpha_k^2
\right],
\]
where $\epsilon_\lambda$ is a small positive constant.

\bibliography{Ref}

@article{6Groadmap,
  author = {Letaief, Khaled B. and Chen, Wei and Shi, Yuanming and Zhang, Jun and Zhang, Ying-Jun Angela},
  title = {The Roadmap to {6G}: {AI} Empowered Wireless Networks},
  journal = {IEEE Commun. Mag.},
  volume = {57},
  number = {8},
  pages = {84--90},
  year = {2019},
  doi = {10.1109/mcom.2019.1900271}
}

@article{GX-CM-2020,
  author = {Zhu, Guangxu and Liu, Dongzhu and Du, Yuqing and You, Changsheng and Zhang, Jun and Huang, Kaibin},
  title = {Toward an intelligent edge: Wireless communication meets machine learning},
  journal = {IEEE Commun. Mag.},
  volume = {58},
  number = {1},
  pages = {19--25},
  year = {2020}
}

@article{liu2023resource,
  author = {Liu, Zhiyan and Lan, Qiao and Huang, Kaibin},
  title = {Resource allocation for multiuser edge inference with batching and early exiting},
  journal = {IEEE J. Sel. Areas Commun.},
  volume = {41},
  number = {4},
  pages = {1186--1200},
  year = {2023},
  doi = {10.1109/jsac.2023.3242724}
}

@article{qu2025partialloading,
  author = {Qu, Guanqiao and Chen, Qian and Chen, Xianhao and Huang, Kaibin and Fang, Yuguang},
  title = {{PartialLoading}: User scheduling and bandwidth allocation for parameter-sharing edge inference},
  journal = {arXiv:2503.22982},
  year = {2025},
  eprint = {2503.22982},
  archivePrefix = {arXiv},
}

@article{shao2025ai,
  author = {Shao, Jiawei and Li, Xuelong},
  title = {{AI} flow at the network edge},
  journal = {IEEE Netw.},
  volume = {40},
  number = {1},
  pages = {330--336},
  year = {2026},
  doi = {10.1109/mnet.2025.3541208}
}

@inproceedings{leviathan2023fast,
  author = {Leviathan, Yaniv and Kalman, Matan and Matias, Yossi},
  title = {Fast inference from transformers via speculative decoding},
  booktitle = {Proc. Int. Conf. Mach. Learn. (ICML)},
  pages = {19274--19286},
  address = {Honolulu, HI, USA},
  year = {2023},
}

@inproceedings{kwon2023efficient,
  author = {Kwon, Woosuk and Li, Zhuohan and Zhuang, Siyuan and Sheng, Ying and Zheng, Lianmin and Yu, Cody Hao and Gonzalez, Joseph and Zhang, Hao and Stoica, Ion},
  title = {Efficient memory management for large language model serving with {PagedAttention}},
  booktitle = {Proc. ACM Symp. Operating Syst. Princ. (SOSP)},
  pages = {611--626},
  address = {Koblenz, Germany},
  year = {2023},
  doi = {10.1145/3600006.3613165}
}

@article{zhang2024edgeshard,
  author = {Zhang, Mingjin and Shen, Xiaoming and Cao, Jiannong and Cui, Zeyang and Jiang, Shan},
  title = {{EdgeShard}: Efficient {LLM} inference via collaborative edge computing},
  journal = {IEEE Internet Things J.},
  volume = {12},
  number = {10},
  pages = {13119--13131},
  year = {2025},
  doi = {10.1109/jiot.2024.3524255}
}

@inproceedings{hao2024hybrid,
  author = {Hao, Zixu and Jiang, Huiqiang and Jiang, Shiqi and Ren, Ju and Cao, Ting},
  title = {Hybrid {SLM} and {LLM} for edge-cloud collaborative inference},
  booktitle = {Proc. Workshop Edge Mobile Found. Models (EdgeFM)},
  pages = {36--41},
  address = {Minato-ku, Tokyo, Japan},
  year = {2024},
  doi = {10.1145/3662006.3662067}
}

@article{xu2024edgellm,
  author = {Xu, Daliang and Yin, Wangsong and Zhang, Hao and Jin, Xin and Zhang, Ying and Wei, Shiyun and Xu, Mengwei and Liu, Xuanzhe},
  title = {{EdgeLLM}: Fast on-device {LLM} inference with speculative decoding},
  journal = {IEEE Trans. Mobile Comput.},
  volume = {24},
  number = {4},
  pages = {3256--3273},
  year = {2025},
  doi = {10.1109/tmc.2024.3513457}
}

@inproceedings{miao2024specinfer,
  author = {Miao, Xupeng and Oliaro, Gabriele and Zhang, Zhihao and Cheng, Xinhao and Wang, Zeyu and Zhang, Zhengxin and Wong, Rae Ying Yee and Zhu, Alan and Yang, Lijie and Shi, Xiaoxiang and others},
  title = {{SpecInfer}: Accelerating large language model serving with tree-based speculative inference and verification},
  booktitle = {Proc. ACM Int. Conf. Archit. Support Program. Lang. Oper. Syst. (ASPLOS)},
  pages = {932--949},
  address = {La Jolla, CA, USA},
  year = {2024},
  doi = {10.1145/3620666.3651335}
}

@article{chen2023accelerating,
  author = {Chen, Charlie and Borgeaud, Sebastian and Irving, Geoffrey and Lespiau, Jean-Baptiste and Sifre, Laurent and Jumper, John},
  title = {Accelerating large language model decoding with speculative sampling},
  journal = {arXiv:2302.01318},
  year = {2023},
  eprint = {2302.01318},
  archivePrefix = {arXiv}
}

@article{li2019edge,
  author = {Li, En and Zeng, Liekang and Zhou, Zhi and Chen, Xu},
  title = {Edge {AI}: On-demand accelerating deep neural network inference via edge computing},
  journal = {IEEE Trans. Wireless Commun.},
  volume = {19},
  number = {1},
  pages = {447--457},
  year = {2020},
  doi = {10.1109/twc.2019.2946140}
}

@article{shao2021communication,
  author = {Shao, Jiawei and Zhang, Jun},
  title = {Communication-computation trade-off in resource-constrained edge inference},
  journal = {IEEE Commun. Mag.},
  volume = {58},
  number = {12},
  pages = {20--26},
  year = {2020},
  doi = {10.1109/mcom.001.2000373}
}

@article{yan2022optimal,
  author = {Yan, Jia and Bi, Suzhi and Zhang, Ying-Jun Angela},
  title = {Optimal model placement and online model splitting for device-edge co-inference},
  journal = {IEEE Trans. Wireless Commun.},
  volume = {21},
  number = {10},
  pages = {8354--8367},
  year = {2022},
  doi = {10.1109/twc.2022.3165824}
}

@article{wang2025revisiting,
  author = {Wang, Zhanwei and Zeng, Qunsong and Zheng, Haotian and Huang, Kaibin},
  title = {Revisiting Outage for Edge Inference Systems},
  journal = {arXiv:2504.03686},
  year = {2025},
  eprint = {2504.03686},
  archivePrefix = {arXiv}
}

@article{wang2026airbreath,
  author = {Wang, Zhanwei and Cui, Mingyao and Yang, Huiling and Zeng, Qunsong and Sheng, Min and Huang, Kaibin},
  title = {Airbreath sensing: Protecting over-the-air distributed sensing against interference},
  journal = {IEEE Trans. Wireless Commun.},
  volume = {25},
  pages = {17415--17429},
  year = {2026},
  doi = {10.1109/twc.2026.3694375}
}

@article{zeng2024knowledge,
  author = {Zeng, Qunsong and Wang, Zhanwei and Zhou, You and Wu, Hai and Yang, Lin and Huang, Kaibin},
  title = {Knowledge-based ultra-low-latency semantic communications for robotic edge intelligence},
  journal = {IEEE Trans. Commun.},
  volume = {73},
  number = {7},
  pages = {4925--4940},
  year = {2025},
  doi = {10.1109/tcomm.2024.3511931}
}

@article{zhang2025llm-inference,
  author = {Zhang, Kai and He, Hengtao and Song, Shenghui and Zhang, Jun and Letaief, Khaled B},
  title = {Communication-efficient distributed on-device {LLM} inference over wireless networks},
  journal = {IEEE J. Sel. Topics Signal Process.},
  volume = {19},
  number = {7},
  pages = {1301--1317},
  year = {2025}
}

@article{xue2025wdmoe,
  author = {Xue, Nan and Sun, Yaping and Chen, Zhiyong and Tao, Meixia and Xu, Xiaodong and Qian, Liang and Cui, Shuguang and Zhang, Wenjun and Zhang, Ping},
  title = {{WDMoE}: Wireless Distributed Mixture of Experts for Large Language Models},
  journal = {IEEE Trans. Wireless Commun.},
  volume = {25},
  pages = {559--572},
  year = {2026},
  doi = {10.1109/twc.2025.3585163}
}

@article{wang2026spacemoe,
  author = {Wang, Zhanwei and Yang, Huiling and Sheng, Min and Letaief, Khaled B and Huang, Kaibin},
  title = {{SpaceMoE}: Realizing Distributed Mixture-of-Experts Inference over Space Networks},
  journal = {arXiv:2605.00515},
  year = {2026},
  eprint = {2605.00515},
  archivePrefix = {arXiv}
}

@inproceedings{zheng2025communicationefficientcollaborativellminference,
  author = {Zheng, Ce and Yang, Tingting},
  title = {Communication-efficient collaborative {LLM} inference via distributed speculative decoding},
  booktitle = {Proc. IEEE Int. Conf. Wireless Commun. Signal Process. (WCSP)},
  pages = {1--6},
  address = {Chongqing, China},
  year = {2025},
  doi = {10.1109/wcsp68525.2025.1010651}
}

@inproceedings{oh2024uncertainty,
  author = {Oh, Seungeun and Kim, Jinhyuk and Park, Jihong and Ko, Seung-Woo and Quek, Tony QS and Kim, Seong-Lyun},
  title = {Uncertainty-aware hybrid inference with on-device small and remote large language models},
  booktitle = {Proc. IEEE Int. Conf. Mach. Learn. Commun. Netw. (ICMLCN)},
  pages = {1--7},
  address = {Barcelona, Spain},
  year = {2025},
  doi = {10.1109/icmlcn64995.2025.11140540}
}

@article{zhang2025quantize,
  author = {Zhang, Guangyi and Cai, Yunlong and Yu, Guanding and Popovski, Petar and Simeone, Osvaldo},
  title = {Quantize-Sample-and-Verify: {LLM} acceleration via adaptive edge-cloud speculative decoding},
  journal = {IEEE Commun. Lett.},
  volume = {30},
  pages = {852--856},
  year = {2026},
  doi = {10.1109/lcomm.2026.3651580}
}

@article{zheng2025fastcollaborativeinferencedistributed,
  author = {Zheng, Ce and Zhang, Ke and Chen, Sun and Zhang, Wenqi and Liu, Qiong and Tesfay, Angesom Ataklity},
  title = {Fast collaborative inference via distributed speculative decoding},
  journal = {J. Inf. Intell.},
  volume = {4},
  number = {1},
  pages = {67--85},
  year = {2026},
  doi = {10.1016/j.jiixd.2025.12.008}
}

@article{wen2023task,
  author={Wen, Dingzhu and Liu, Peixi and Zhu, Guangxu and Shi, Yuanming and Xu, Jie and Eldar, Yonina C. and Cui, Shuguang},
  title={Task-Oriented Sensing, Computation, and Communication Integration for Multi-Device Edge {AI}}, 
  journal = {IEEE Trans. Wireless Commun.},
  year={2024},
  volume={23},
  number={3},
  pages={2486-2502},
  doi={10.1109/TWC.2023.3303232}
}

@inproceedings{chen2025spin,
  author = {Chen, Fahao and Li, Peng and Luan, Tom H and Su, Zhou and Deng, Jing},
  title = {{SPIN}: Accelerating large language model inference with heterogeneous speculative models},
  booktitle = {Proc. IEEE Conf. Comput. Commun. (INFOCOM)},
  pages = {1--10},
  address = {London, U.K.},
  year = {2025},
  doi = {10.1109/infocom55648.2025.11044522}
}

@inproceedings{pope2023efficiently,
  author = {Pope, Reiner and Douglas, Sholto and Chowdhery, Aakanksha and Devlin, Jacob and Bradbury, James and Heek, Jonathan and Xiao, Kefan and Agrawal, Shivani and Dean, Jeff},
  title = {Efficiently scaling transformer inference},
  booktitle = {Proc. Mach. Learn. Syst. (MLSys)},
  pages = {606--624},
  address = {Miami Beach, FL, USA},
  year = {2023}
}

@manual{nvidia_a100_datasheet,
  author = {{NVIDIA Corporation}},
  title = {{NVIDIA A100 Tensor Core GPU}: Data Sheet},
  organization = {NVIDIA Corporation},
  year = {2020},
  url = {https://www.nvidia.com/content/dam/en-zz/Solutions/Data-Center/a100/pdf/nvidia-a100-datasheet-nvidia-us-2188504-web.pdf},
  note = {Accessed: Jun. 7, 2026}
}

@inproceedings{qian2024bass,
  author = {Qian, Haifeng and Gonugondla, Sujan Kumar and Ha, Sungsoo and Shang, Mingyue and Gouda, Sanjay Krishna and Nallapati, Ramesh and Sengupta, Sudipta and Ma, Xiaofei and Deoras, Anoop},
  title = {{BASS}: Batched Attention-Optimized Speculative Sampling},
  booktitle = {Proc. Findings Assoc. Comput. Linguistics (ACL)},
  pages = {8214--8224},
  address = {Bangkok, Thailand},
  year = {2024},
  doi = {10.18653/v1/2024.findings-acl.489}
}

@article{su2023synergy,
  author = {Su, Qidong and Giannoula, Christina and Pekhimenko, Gennady},
  title = {The synergy of speculative decoding and batching in serving large language models},
  journal = {arXiv:2310.18813},
  year = {2023}
}

@article{wu2025resource,
  author = {Wu, Hai and Chen, Xu and Huang, Kaibin},
  title = {Resource management for low-latency cooperative fine-tuning of foundation models at the network edge},
  journal = {IEEE Trans. Wireless Commun.},
  volume = {24},
  number = {6},
  pages = {4839--4852},
  year = {2025},
  doi = {10.1109/twc.2025.3544333}
}

@article{you2016energy,
  author = {You, Changsheng and Huang, Kaibin and Chae, Hyukjin and Kim, Byoung-Hoon},
  title = {Energy-efficient resource allocation for mobile-edge computation offloading},
  journal = {IEEE Trans. Wireless Commun.},
  volume = {16},
  number = {3},
  pages = {1397--1411},
  year = {2017},
  doi = {10.1109/twc.2016.2633522}
}

@inproceedings{parallelSD,
  author = {Liu, Tianyu and Li, Yun and Lv, Qitan and Liu, Kai and Zhu, Jianchen and Hu, Winston and Sun, Xiao},
  title = {{PEARL}: Parallel speculative decoding with adaptive draft length},
  booktitle = {Proc. Int. Conf. Learn. Represent. (ICLR)},
  address = {Singapore},
  year = {2025}
}

@article{yang2026optimal,
  author = {Yang, Huiling and Wang, Zhanwei and Huang, Kaibin},
  title = {Optimal batch-size control for low-latency federated learning with device heterogeneity},
  journal = {IEEE Trans. Commun.},
  volume = {74},
  pages = {5232--5247},
  year = {2026},
  doi = {10.1109/tcomm.2026.3666674}
}

@book{boyd2004convex,
  author = {Boyd, Stephen and Vandenberghe, Lieven},
  title = {Convex Optimization},
  address = {Cambridge, U.K.},
  publisher = {Cambridge Univ. Press},
  year = {2004}
}

@article{touvron2023llama,
  author = {Touvron, Hugo and Martin, Louis and Stone, Kevin and Albert, Peter and Almahairi, Amjad and Babaei, Yasmine and Bashlykov, Nikolay and Batra, Soumya and Bhargava, Prajjwal and Bhosale, Shruti and others},
  title = {Llama 2: Open foundation and fine-tuned chat models},
  journal = {arXiv:2307.09288},
  year = {2023},
  eprint = {2307.09288},
  archivePrefix = {arXiv}
}

@misc{qwen3.5,
  author = {{Qwen Team}},
  title = {{Qwen3.5}: Towards Native Multimodal Agents},
  year = {2026},
  month = {Feb.},
  url = {https://qwen.ai/blog?id=qwen3.5},
  howpublished = {Online}
}

@article{austin2021program,
  author = {Austin, Jacob and Odena, Augustus and Nye, Maxwell and Bosma, Maarten and Michalewski, Henryk and Dohan, David and Jiang, Ellen and Cai, Carrie and Terry, Michael and Le, Quoc and others},
  title = {Program synthesis with large language models},
  journal = {arXiv:2108.07732},
  year = {2021},
  eprint = {2108.07732},
  archivePrefix = {arXiv}
}

@article{cobbe2021training,
  author = {Cobbe, Karl and Kosaraju, Vineet and Bavarian, Mohammad and Chen, Mark and Jun, Heewoo and Kaiser, Lukasz and Plappert, Matthias and Tworek, Jerry and Hilton, Jacob and Nakano, Reiichiro and others},
  title = {Training verifiers to solve math word problems},
  journal = {arXiv:2110.14168},
  year = {2021},
  eprint = {2110.14168},
  archivePrefix = {arXiv}
}

@inproceedings{zheng2023judging,
  author = {Zheng, Lianmin and Chiang, Wei-Lin and Sheng, Ying and Zhuang, Siyuan and Wu, Zhanghao and Zhuang, Yonghao and Lin, Zi and Li, Zhuohan and Li, Dacheng and Xing, Eric and others},
  title = {Judging {LLM}-as-a-judge with {MT-Bench} and Chatbot Arena},
  booktitle = {Proc. Adv. Neural Inf. Process. Syst. (NeurIPS)},
  pages = {46595--46623},
  address = {New Orleans, LA, USA},
  year = {2023},
  doi = {10.52202/075280-2020}
}

@inproceedings{rajpurkar2016squad,
  author = {Rajpurkar, Pranav and Zhang, Jian and Lopyrev, Konstantin and Liang, Percy},
  title = {{SQuAD}: 100,000+ questions for machine comprehension of text},
  booktitle = {Proc. Conf. Empirical Methods Natural Lang. Process. (EMNLP)},
  pages = {2383--2392},
  address = {Austin, TX, USA},
  year = {2016},
  doi = {10.18653/v1/d16-1264}
}

@article{lyu2024rethinking,
  title={Rethinking resource management in edge learning: A joint pre-training and fine-tuning design paradigm},
  author={Lyu, Zhonghao and Li, Yuchen and Zhu, Guangxu and Xu, Jie and Poor, H Vincent and Cui, Shuguang},
  journal={IEEE Trans. on Wireless Commun.},
  volume={24},
  number={2},
  pages={1584--1601},
  year={2024},
}
\bibliographystyle{IEEEtran}

\end{document}